\newtheorem{theorem}{Theorem}
\newtheorem{corollary}[theorem]{Corollary}
\newtheorem{example}[theorem]{Example}
\newtheorem{lemma}[theorem]{Lemma}
\begin{document}

\begin{frontmatter}


\ead{mkoroglu@yildiz.edu.tr, msari@yildiz.edu.tr}

\title{On MDS Negacyclic LCD Codes}


\author{Mehmet E. Koroglu and Mustafa Sar\i}

\address{Department of Mathematics, Faculty of Art and Science\\
Y\i ld\i z Technical University, Esenler, Istanbul 34220, Turkey}

\begin{abstract}
 Linear codes with complementary duals (LCD) have a great deal of significance amongst linear codes. Maximum distance separable (MDS) codes are also an important class of linear codes since they achieve the greatest error correcting and detecting capabilities for fixed length and dimension. The construction of linear codes that are both LCD and MDS is a hard task in coding theory. In this paper, we study the constructions of LCD codes that are MDS from negacyclic codes over finite fields of odd prime power $q$ elements.
We construct four families of MDS negacyclic LCD codes of length $n|\frac{{q-1}}{2}$, $n|\frac{{q+1}}{2}$ and a family of negacyclic LCD codes of length $n=q-1$. Furthermore, we obtain five families of $q^{2}$-ary Hermitian MDS negacyclic LCD codes of length $n|\left( q-1\right)$ and four families of Hermitian negacyclic LCD codes of length $n=q^{2}+1.$ For both Euclidean and Hermitian cases the dimensions of these codes are determined and for some classes the minimum distances are settled. For the other cases, by studying $q$ and $q^{2}$-cyclotomic classes we give lower bounds on the minimum distance.
\end{abstract}

\begin{keyword}
Linear codes\sep Negacyclic codes\sep LCD codes \sep Euclidean inner product \sep Hermitian inner product

\MSC[2008] 94B15 \sep 94B05

\end{keyword}
\end{frontmatter}

\section{Introduction}

Linear codes with complementary-duals (LCD codes), which was introduced by
Massey in 1992 (see \cite{Massey}), have many applications in cryptography,
communication systems, data storage and consumer electronics. A linear code
is called an LCD code if $\mathcal{C}^{\bot }\cap \mathcal{C}=\left\{
0\right\} .$ LCD codes provide an optimum linear coding solution for binary
adder channel \cite{Massey}, and in \cite{Massey1} it has been shown that
asymptotically good LCD codes exist. Further, in \cite{Sendrier} Sendrier
proved that LCD codes meet Gilbert-Varshamov bound. In \cite{Yang}, Yang and
Massey gave a necessary and sufficient condition for a cyclic code to have a
complementary dual. All LCD constacyclic codes of length $2l^{t}p^{s}$ has
been determined in \cite{Chen}. The LCD condition for a certain class of
quasi cyclic codes has been studied in \cite{Esmaeili}. In \cite{Dougherty},
Dougherty et al. have been given a linear programming bound on the largest
size of an LCD code of given length and minimum distance. Guneri et al.
introduced Hermitian LCD codes in \cite{Guneri}. In \cite{Zhu}, a class of
MDS negacyclic LCD codes of even length $\left. n\right\vert q-1$ has been
given. Carlet and Guiley have studied an application of LCD codes against
side-channel attacks, and have presented particular constructions for LCD
codes in \cite{Carlet}. MDS LCD codes over finite field $\mathbb{F}_{q}$
with even $q$ have been completely solved in \cite{Jin}. In \cite{Li}, Li et
al. have explored two special families of LCD cyclic codes, which are both
BCH codes. The authors of \cite{Li1} have constructed several families of
reversible cyclic codes over finite fields and have analyzed their
parameters. Galvez et al, gave exact values of dimension $k$ and length $n$
of a binary LCD code, where $1\leq k\leq n\leq 12.$ In \cite{Li2}, Li has
constructed some non MDS cyclic Hermitian LCD codes over finite fields and
has analysed their parameters. In \cite{Chen1}, Chen and Liu have proposed a different approach to obtain new LCD MDS codes from generalized Reed-Solomon
codes. In \cite{Beelen}, Beelen and Jin gave an explicit construction of
several classes of LCD MDS codes, using tools from algebraic function
fields. In \cite{Carlet1}, Carlet et al. have studied several constructions
of new Euclidean and Hermitian LCD MDS codes. In \cite{Sok}, Sok et al. have proved existence of optimal LCD codes over large finite fields and they have also gave methods to generate orthogonal matrices over finite fields and then apply them to construct LCD codes.

In this paper, we obtain four families of MDS negacyclic LCD codes and a
family of negacyclic LCD codes as follows:

\begin{enumerate}
\item For even $n>2,$ ${\left[ {n,n-2\lambda ,2\lambda +1}\right] _{q}},$
where $1\leq \lambda \leq \frac{{n-2}}{2},$ $q$ is an odd prime power and $n|%
\frac{{q-1}}{2}$ such that $n\neq 1.$

\item For odd $n,$ ${\left[ {n,n-2\lambda -1,2\left( {\lambda +1}\right) }%
\right] _{q}},$ where $0\leq \lambda \leq \frac{{n-3}}{2},$ $q$ is an odd
prime power and $n|\frac{{q-1}}{2}$ such that $n\neq 1.$

\item For even $n>2,$ ${\left[ {n,2\lambda,n-2\lambda+1}\right] _{q}}$,
where $1\leq \lambda\leq \frac{{n}}{2}-1,$ $q$ is an odd prime power and $n|%
\frac{{q+1}}{2}$ such that $n\neq 1.$

\item For odd $n,$ ${\left[ {n,2\lambda ,n-2\lambda +1}\right] _{q}},$ where
$1\leq \lambda \leq \frac{{n-1}}{2},$ $q$ is an odd prime power and $n|\frac{%
{q+1}}{2}$ such that $n\neq 1$.

\item ${\left[ {q + 1,4\lambda,d \ge \frac{{q + 3}}{2} - 2\lambda} \right]_q}%
,$ where $1 \le \lambda \le \frac{{q - 3}}{4},$ $q$ is an odd prime power
and $n=q+1$ such that $4|n.$
\end{enumerate}

We also obtain five families of negacyclic MDS Hermitian LCD codes and four
families of negacyclic Hermitian LCD codes as follows:

\begin{enumerate}
\item ${\left[ n,n-2l-2,2l+3\right] _{q^{2}}},$ where $2\nmid \gamma ,$ $%
0\leq l\leq \frac{q-4\gamma -1}{4\gamma },$ $q\equiv 1$ $\left( mod\text{ }%
4\right) ,$ and $n=\frac{{q-1}}{\gamma }>2.$

\item ${\left[ n,n-2l-2,2l+3\right] _{q^{2}}},$ where $2|\gamma ,$ $0\leq
l\leq \frac{q-4\gamma -1}{2\gamma },$ $q\equiv 1$ $\left( mod\text{ }%
4\right) ,$ $n=\frac{{q-1}}{\gamma }>2,$ and $n$ is even.

\item ${\left[ n,n-2l-1,2l+2\right] _{q^{2}}},$ where $2|\gamma ,$ $0\leq
l\leq \frac{q-3\gamma -1}{2\gamma },$ $q\equiv 1$ $\left( mod\text{ }%
4\right) ,$ $n=\frac{{q-1}}{\gamma }>2,$ and $n$ is odd.

\item ${\left[ n{,n-}\left( {\frac{{q-1}}{2}+2l+2}\right) {,d\geq \frac{{q-1}%
}{2{\gamma }}+l+2}\right] _{q^{2}}},$ where $2\nmid \gamma ,$ $0\leq l\leq
\frac{q-8\gamma -1}{4\gamma },$ $n=\frac{{q-1}}{\gamma }>4,$ and $q\equiv 1$
$\left( mod\text{ }4\right) .$

\item ${\left[ n,n-2l-1,2l+2\right] _{q^{2}}},$ where $2\nmid \gamma ,$ $%
0\leq l\leq \frac{q-2\gamma -1}{4\gamma },$ $q\equiv 3$ $\left( mod\text{ }%
4\right) ,$ and $n=\frac{{q-1}}{\gamma }>2.$

\item ${\left[ n,n-2l-1,2l+2\right] _{q^{2}}},$ where $2\left\vert \gamma
\right. ,$ $0\leq l\leq \frac{q-3\gamma -1}{2\gamma },$ $q\equiv 3$ $\left(
mod\text{ }4\right) ,$ $n=\frac{{q-1}}{\gamma }>2.$

\item ${\left[ n{,n-}\left( {\frac{{q-1}}{2}+2l+2}\right) {,d\geq \frac{{q-1}%
}{2}+l+2}\right] _{q^{2}}},$ where $2\nmid \gamma ,$ $0\leq l\leq \frac{%
q-6\gamma -1}{4\gamma },$ $n=\frac{{q-1}}{\gamma }>4,$ and $q\equiv 3$ $%
\left( mod\text{ }4\right) .$

\item ${\left[ q^{2}+1{,4l,d\geq }\frac{{q}^{2}-4l+3}{2}\right] _{q^{2}}},$
where $q$ is an odd prime power such that $q\equiv 1$ $(mod$ $4)$ and $\frac{%
\left( q-1\right) ^{2}}{4}\leq l\leq \frac{{q}^{2}{-1}}{4}.$

\item ${\left[ q^{2}+1{,4l+1,d\geq }\frac{{q}^{2}-4l+3}{2}\right] _{q^{2}}},$
where $q$ is an odd prime power such that $q\equiv 3$ $(mod$ $4)$ and $\frac{%
\left( q-1\right) \left( q-3\right) }{4}\leq l\leq \frac{{q}^{2}{-1}}{4}.$
\end{enumerate}

The rest of the paper is organized as follows. In Section 2, we present some
definitions and basic results of negacyclic codes. In Section 3, we
construct four families of LCD codes of length $n|\frac{{q-1}}{2},$ $n|\frac{%
{q+1}}{2}$ from negacyclic codes and we show that these codes are MDS.
Moreover, by studying their defining sets we determine parameters of a class
of LCD codes of length $n=q-1.$ In Section 4, we handle Hermitian negacyclic
LCD codes over $\mathbb{F}_{q^{2}}.$ The last Section is devoted to
conclusion.

\section{Preliminaries}

In this section, we will give some preliminaries, which are required for the
subsequent sections. Let $q$ be a prime power and $\mathbb{F}_{q}$ be the
finite field with $q$ elements. An $\left[ n,k\right] _{q}$ linear code $%
\mathcal{C}$ of length $n$ over $\mathbb{F}_{q}$ is a $k$-dimensional
subspace of the vector space $\mathbb{F}_{q}^{n}.$ The elements of $\mathcal{%
C}$ are of the form $\left( c_{0},c_{1},\ldots ,c_{n-1}\right) $ and called
codewords. The Hamming weight of any $c\in \mathcal{C}$ is the number of
nonzero coordinates of $c$ and denoted by $w\left( c\right) .$ The minimum
distance of $\mathcal{C}$ is defined as $d=\min \left\{ \left. w\left(
c\right) \right\vert 0\neq c\in \mathcal{C}\right\} .$ An $\left[ n,k\right]
_{q}$ linear code with minimum distance $d$ is said to be MDS (maximum
distance separable) if $n+1=k+d.$ The Euclidean dual code of $\mathcal{C}$
is defined to be
\begin{equation*}
\mathcal{C}^{\bot }=\left\{ \left. \mathbf{x}\in \mathbb{F}%
_{q}^{n}\right\vert \sum\limits_{i=0}^{n-1}x_{i}y_{i}=0,\forall \mathbf{y}%
\in \mathcal{C}\right\} .
\end{equation*}%
A code $\mathcal{C}$ is Euclidean self-orthogonal if $\mathcal{C}\subset
\mathcal{C}^{\bot }$ and Euclidean self-dual if $\mathcal{C}^{\bot }=%
\mathcal{C}.$

Let $\left\langle \mathbf{x,y}\right\rangle
=\sum\limits_{i=0}^{n-1}x_{i}y_{i}^{q}$ be the Hermitian inner product of $%
\mathbf{x}$ and $\mathbf{y\in }\mathbb{F}_{q^{2}}^{n}$ and $\mathcal{C}$ be
a code of length $n$ over $\mathbb{F}_{q^{2}}.$ The Hermitian dual code of $%
\mathcal{C}$ is defined to be
\begin{equation*}
\mathcal{C}^{\bot H}=\left\{ \left. \mathbf{x}\in \mathbb{F}%
_{q^{2}}^{n}\right\vert \sum\limits_{i=0}^{n-1}x_{i}y_{i}^{q}=0,\forall
\mathbf{y}\in \mathcal{C}\right\} .
\end{equation*}
A code $\mathcal{C}$ is Hermitian self-orthogonal if $\mathcal{C}\subset
\mathcal{C}^{\bot H}$ and Hermitian self-dual if $\mathcal{C}^{\bot H}=%
\mathcal{C}.$

A linear code $\mathcal{C}$ of length $n$ over $\mathbb{F}_{q}$ is said to
be negacyclic if for any codeword $(c_{0},c_{1},\ldots ,c_{n-1})\in \mathcal{%
C}$ we have that $(-c_{n-1},c_{0},c_{1},\ldots ,c_{n-2})\in \mathcal{C}.$ A
negacyclic code of length $n$ over $\mathbb{F}_{q}$ corresponds to a
principal ideal $\left\langle g\left( x\right) \right\rangle $ of the
quotient ring $\mathbb{F}_{q}\left[ x\right] /\left\langle {{x^{n}+1}}%
\right\rangle $ where $\left. g\left( x\right) \right\vert {{x^{n}+1}.}$ The
roots of the code $\mathcal{C}$ are the roots of the polynomial $g\left(
x\right) .$ So, if $\beta _{1},\beta _{2},\ldots ,\beta _{n-k}$ are the
zeros of $g\left( x\right) $ in the splitting field of ${{x^{n}}+1,}$ then $%
\mathbf{c}=\left( c_{0},c_{1},\ldots ,c_{n-1}\right) \in $$\mathcal{C}$ if
and only if $c\left( \beta _{1}\right) =c\left( \beta _{2}\right) =\ldots
=c\left( \beta _{n-k}\right) =0,$ where $c\left( x\right)
=c_{0}+c_{1}x+\ldots +c_{n-1}x^{n-1}.$

Let $2=ord_{q}\left( -1\right) $ and the multiplicative order of $q$ modulo $%
2n$ be $m.$ There exists $\delta \in \mathbb{F}_{q^{m}}^{\ast }$ called a
primitive $2n^{th}$ root of unity such that $\delta ^{n}=-1.$ Let $\zeta
=\delta ^{2},$ then $\zeta $ is a primitive $n^{th}$ root of unity.
Therefore, the roots of $x^{n}+1$ are $\left\{ \delta ,\delta ^{1+2},\ldots
,\delta ^{1+\left( n-1\right) 2}\right\} .$ Define $O_{2,n}\left( 1\right) $
as follows:%
\begin{equation}
O_{2,n}\left( 1\right) =\left\{ \left. 1+2i\right\vert 0\leq i\leq
n-1\right\} (mod\text{ }2n)\subseteq \mathbb{Z}_{2n}.
\end{equation}%
The defining set of the negacyclic code $\mathcal{C}$ is defined as
\begin{equation}
Z=\left\{ \left. 1+2i\in O_{2,n}\left( 1\right) \right\vert \delta ^{1+2i}%
\text{ is a root of }\mathcal{C}\right\} .
\end{equation}%
Clearly, $Z\subset O_{2,n}\left( 1\right) $ and the dimension of $\mathcal{C}
$ is $n-\left\vert Z\right\vert .$ Let $\mathbb{Z}_{2n}=\left\{ 0,1,2,\ldots
,2n-1\right\} $ denote the ring of integers modulo $2n.$ For any $s\in
\mathbb{Z}_{2n},$ the $q$-cyclotomic coset of $s$ modulo $2n$ is defined by $%
C_{s}=\left\{ \left. sq^{j}\left( {mod}\text{ }2n\right) \right\vert j\in
\mathbb{Z}\right\} .$

For each polynomial $g\left( x\right) =g_{0}+g_{1}x+\ldots +g_{r}x^{r}$ with
$g_{r}\neq 0,$ the reciprocal of $g\left( x\right) $ is defined to be the
polynomial $g^{\ast }\left( x\right) =x^{r}g\left( 1/x\right) .$ $g\left(
x\right) $ is called self-reciprocal if and only if $g\left( x\right)
=g^{\ast }\left( x\right).$

The following is a adapted version of BCH bound to negacyclic codes (\textit{for
general case see} \cite{Aydin,Krishna}).

\begin{theorem}
\label{thr0} \emph{(The BCH bound for negacyclic codes)} Let $\left( {n,q}%
\right) =1$ and also let $\delta $ be an $2n^{th}$ root of unity with $%
\delta ^{n}=-1.$ Then, the minimum distance of a negacyclic code of length $n
$ over $\mathbb{F}_{q}$ with the defining set $Z$ containing the set $%
\left\{ {\ \left. {1+2j}\right\vert l\leq j\leq l+d-2}\right\} $ is at least
$d$.
\end{theorem}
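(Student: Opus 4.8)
The plan is to mimic the classical BCH/Vandermonde argument, adapted to the negacyclic setting where the roots attached to the defining set are of the form $\delta^{1+2j}$ rather than powers of a single root of unity. I would argue by contradiction: suppose $\mathcal{C}$ contains a nonzero codeword $\mathbf{c}=(c_{0},\ldots ,c_{n-1})$ of Hamming weight $w\leq d-1$, and let $0\leq i_{1}<i_{2}<\cdots <i_{w}\leq n-1$ index its nonzero coordinates. Since each $\delta^{1+2j}$ with $l\leq j\leq l+d-2$ is a root of $\mathcal{C}$, the codeword satisfies $c(\delta^{1+2j})=0$, that is,
\[
\sum_{t=1}^{w}c_{i_{t}}\left(\delta^{1+2j}\right)^{i_{t}}=0,\qquad l\leq j\leq l+d-2 .
\]
These are $d-1\geq w$ linear constraints on the $w$ unknowns $c_{i_{1}},\ldots ,c_{i_{w}}$.

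The key step is to expose a Vandermonde structure inside this system. Writing $\zeta =\delta^{2}$ and using $(\delta^{1+2j})^{i_{t}}=\delta^{i_{t}}(\zeta^{i_{t}})^{j}$, I would retain only the $w$ equations with $j=l,l+1,\ldots ,l+w-1$ and view them as a homogeneous system whose coefficient matrix has $(j,t)$ entry $\delta^{i_{t}}(\zeta^{i_{t}})^{j}$. Factoring $\delta^{i_{t}}$ and $(\zeta^{i_{t}})^{l}$ out of each column $t$ reduces the determinant, up to a nonzero scalar, to that of the Vandermonde matrix $\bigl((\zeta^{i_{t}})^{k}\bigr)_{0\leq k\leq w-1,\ 1\leq t\leq w}$, namely $\prod_{1\leq s<t\leq w}(\zeta^{i_{t}}-\zeta^{i_{s}})$.

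It then remains to confirm this determinant is nonzero. Because $(n,q)=1$ forces $\zeta$ to be a primitive $n$th root of unity and $0\leq i_{1}<\cdots <i_{w}\leq n-1$, the elements $\zeta^{i_{1}},\ldots ,\zeta^{i_{w}}$ are pairwise distinct, so every factor $\zeta^{i_{t}}-\zeta^{i_{s}}$ is nonzero; together with $\delta^{i_{t}}\neq 0$ and $(\zeta^{i_{t}})^{l}\neq 0$, the $w\times w$ coefficient matrix is nonsingular. Hence the homogeneous system admits only the trivial solution $c_{i_{1}}=\cdots =c_{i_{w}}=0$, contradicting $\mathbf{c}\neq 0$. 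I would conclude that no nonzero codeword has weight at most $d-1$, so the minimum distance is at least $d$.

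I expect the only delicate point to be the bookkeeping of exponents that exposes the Vandermonde form: one must verify that the consecutive choices $j=l,\ldots ,l+w-1$ genuinely yield the power structure $(\zeta^{i_{t}})^{j}$ once the scalar $\delta^{i_{t}}$ is pulled out of each column, and that the hypothesis $(n,q)=1$ is precisely what guarantees $\zeta$ has exact order $n$, so that distinct coordinate indices give distinct base elements $\zeta^{i_{t}}$. Once this reduction is secured, the nonvanishing of the Vandermonde determinant, and hence the bound, follows immediately.
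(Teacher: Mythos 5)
Your strategy is the standard one, and in fact the paper offers no proof of this theorem at all: it is stated as an adaptation of the BCH bound with a pointer to \cite{Aydin,Krishna}, where precisely the Vandermonde-determinant argument you outline appears. Your bookkeeping is correct: restricting to the $w$ equations $j=l,\dots ,l+w-1$, writing $(\delta ^{1+2j})^{i_{t}}=\delta ^{i_{t}}(\zeta ^{i_{t}})^{j}$ with $\zeta =\delta ^{2}$, and pulling the nonzero scalar $\delta ^{i_{t}}(\zeta ^{i_{t}})^{l}$ out of column $t$ reduces the coefficient matrix to the Vandermonde matrix with nodes $\zeta ^{i_{1}},\dots ,\zeta ^{i_{w}}$, so everything hinges on these nodes being pairwise distinct.

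That is where your proof has its one genuine flaw: you justify distinctness by claiming that $(n,q)=1$ \emph{forces} $\zeta $ to be a primitive $n$th root of unity, and it does not. The hypotheses as literally stated (a $2n$th root of unity $\delta $ with $\delta ^{n}=-1$) do not make $\delta $ primitive, and then $\zeta =\delta ^{2}$ can have order strictly smaller than $n$, the powers $\zeta ^{i_{t}}$ collide, and the theorem itself becomes false. Concretely, take $q=5$, $n=6$, $\delta =2\in \mathbb{F}_{5}$: then $(n,q)=1$, $\delta ^{12}=1$ and $\delta ^{6}=-1$, but $\delta $ has order $4$, so the elements $\delta ^{1+2j}$ merely alternate between $2$ and $3$; the negacyclic code generated by $(x-2)(x-3)=x^{2}+1$ then has a defining set containing consecutive elements $1+2j$, yet it contains the weight-$2$ codeword $x^{2}+1$, violating the claimed bound. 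What rescues the argument is the convention fixed in the paper's Preliminaries: there $\delta $ is a \emph{primitive} $2n$th root of unity with $\delta ^{n}=-1$, and primitivity of $\delta $ (not coprimality by itself) gives $\mathrm{ord}(\zeta )=n$, hence distinct nodes. The role of $(n,q)=1$, together with $q$ odd, is only to guarantee that such a primitive $\delta $ exists, i.e., that $x^{2n}-1$ is separable over $\mathbb{F}_{q}$. With the distinctness of the $\zeta ^{i_{t}}$ attributed to the primitivity of $\delta $, your proof is complete and coincides with the standard (cited) one.
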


In the following, the relation between LCD codes and generator polynomial of
negacyclic codes is given.

\begin{theorem}
\cite{Zhu}\label{thr00} Let $\mathcal{C}=\left\langle g\left( x\right)
\right\rangle $ be a negacyclic code over $\mathbb{F}_{q}.$ Then, the
following statements are equivalent.

\begin{enumerate}
\item $\mathcal{C}$ is an LCD code.

\item $g\left( x\right) $ is self reciprocal.

\item $\delta ^{-1}$ is a root of $g\left( x\right) $ for every root $\delta
$ of $g\left( x\right) $ over the splitting field of $g\left( x\right) .$
\end{enumerate}
\end{theorem}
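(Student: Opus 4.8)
The plan is to prove the equivalences in the cycle $(2)\Leftrightarrow(3)\Leftrightarrow(1)$, exploiting throughout that $x^{n}+1$ is separable over $\mathbb{F}_{q}$. Since $q$ is odd and $\gcd(n,q)=1$, we have $\gcd(2n,q)=1$, so $x^{n}+1$ has $n$ distinct roots, namely $\{\delta^{s}:s\in O_{2,n}(1)\}$, and every monic divisor of $x^{n}+1$ is completely determined by the subset of these roots on which it vanishes. In particular $\mathcal{C}=\langle g(x)\rangle$ is determined by its defining set $Z$. I would first settle $(2)\Leftrightarrow(3)$: by definition $g^{\ast}(x)=x^{r}g(1/x)$, so the roots of $g^{\ast}$ are exactly the inverses of the roots of $g$. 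By separability, $g$ and $g^{\ast}$ agree up to their monic normalisation precisely when their root sets coincide, i.e. when the root set of $g$ is closed under inversion. Since the roots of $g$ lie among the $\delta^{s}$, this closure is exactly the statement that $\delta^{-1}$ is a root of $g$ whenever $\delta$ is, which is condition $(3)$.

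The substance of the theorem is $(1)\Leftrightarrow(3)$, and here the main step, and the chief obstacle, is to pin down the Euclidean dual as a negacyclic code. Writing $x^{n}+1=g(x)h(x)$, I would first recall that the dual of a negacyclic code is again negacyclic (the code is $(-1)$-constacyclic and $(-1)^{-1}=-1$), and then show that $\mathcal{C}^{\bot}=\langle h^{\ast}(x)\rangle$ up to a nonzero scalar. This is the negacyclic analogue of the Yang--Massey description and is proved by checking that every codeword of $\langle h^{\ast}\rangle$ is Euclidean-orthogonal to every codeword of $\mathcal{C}$, together with the dimension count $\dim\mathcal{C}^{\bot}=\deg g=\deg h^{\ast}$. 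The delicate point is to carry the sign conventions of the negacyclic (rather than cyclic) shift correctly through this orthogonality computation. Granting this, the root set of $h$ is $\{\delta^{s}:s\in O_{2,n}(1)\setminus Z\}$, so the defining set of $\mathcal{C}^{\bot}=\langle h^{\ast}\rangle$ is $-\bigl(O_{2,n}(1)\setminus Z\bigr)$, with negation taken modulo $2n$; note $-s$ is odd whenever $s$ is, so this set indeed lies in $O_{2,n}(1)$.

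Finally I would run the intersection computation. The intersection of two negacyclic codes is negacyclic with defining set the union of their defining sets, so $\mathcal{C}\cap\mathcal{C}^{\bot}$ has defining set $Z\cup\bigl(-(O_{2,n}(1)\setminus Z)\bigr)$, and $\mathcal{C}$ is LCD exactly when this code is $\{0\}$, i.e. when its defining set is all of $O_{2,n}(1)$. Since negation is a bijection of $O_{2,n}(1)$, one has $-\bigl(O_{2,n}(1)\setminus Z\bigr)=O_{2,n}(1)\setminus(-Z)$, and therefore
\begin{equation*}
Z\cup\bigl(O_{2,n}(1)\setminus(-Z)\bigr)=O_{2,n}(1)\iff -Z\subseteq Z\iff -Z=Z,
\end{equation*}
the last equivalence holding because $\lvert -Z\rvert=\lvert Z\rvert$. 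But $-Z=Z$ says precisely that the root set of $g$ is closed under inversion, which is condition $(3)$, closing the cycle. I expect the only genuinely technical hurdle to be the determination of $\mathcal{C}^{\bot}$ with the correct negacyclic signs; everything else reduces to separability and elementary set manipulation modulo $2n$.
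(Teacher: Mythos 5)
The paper does not prove Theorem~\ref{thr00} at all; it is quoted from \cite{Zhu}, so there is no in-paper argument to compare against, only the standard proof in that reference. Your outline is correct and follows exactly that standard route --- establish $\mathcal{C}^{\bot }=\left\langle h^{\ast }\left( x\right) \right\rangle $ for $x^{n}+1=g\left( x\right) h\left( x\right) $, then reduce everything to defining sets and check $\mathcal{C}\cap \mathcal{C}^{\bot }=\left\{ 0\right\} $ iff $-Z=Z$ --- with the single loose end that condition (2) demands $g=g^{\ast }$ exactly rather than up to monic normalisation; this does follow, since closure of the root set under inversion forces $g\left( 0\right) =1$ (inverse pairs of roots multiply to $1$, and the only possible self-inverse root of $x^{n}+1$ is $-1$, whose sign cancels against the factor $\left( -1\right) ^{\deg g}$).
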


The following is a direct result of Theorem \ref{thr00}.

\begin{corollary}
\label{cor00} Euclidean LCD negacyclic codes over $\mathbb{F}_{q}$ of length
$n$\ exists if and only if $C_{s}=C_{-s}$ for some $s\in O_{2,n}\left(
1\right) ,$ where $C_{s}$ is a $q$-cyclotomic coset modulo $2n.$
\end{corollary}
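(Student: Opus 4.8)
The plan is to read the corollary off Theorem~\ref{thr00} by passing from the self-reciprocity of the generator polynomial to the combinatorics of the defining set. Recall that a negacyclic code $\mathcal{C}=\langle g(x)\rangle$ of length $n$ is determined by its defining set $Z\subseteq O_{2,n}\left(1\right)$, and that $g(x)\in\mathbb{F}_q[x]$ forces the root set of $g$, hence $Z$, to be invariant under multiplication by $q$ modulo $2n$; that is, $Z$ is a union of $q$-cyclotomic cosets $C_s$. By Theorem~\ref{thr00}, $\mathcal{C}$ is LCD if and only if $g$ is self-reciprocal, equivalently $\delta^{-s}$ is a root of $g$ whenever $\delta^{s}$ is. Since $\delta$ has order $2n$, this is exactly the condition $Z=-Z$ in $\mathbb{Z}_{2n}$, i.e. $C_{-s}\subseteq Z$ whenever $C_s\subseteq Z$. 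First I would record this translation, and then treat the two implications separately.

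For the sufficiency direction I would argue constructively. Suppose $C_s=C_{-s}$ for some $s\in O_{2,n}\left(1\right)$ and set $g(x)=\prod_{j\in C_s}\left(x-\delta^{j}\right)$. Because $C_s$ is a $q$-cyclotomic coset its root set is closed under the Frobenius map $x\mapsto x^{q}$, so $g(x)\in\mathbb{F}_q[x]$, and since $C_s\subseteq O_{2,n}\left(1\right)$ we have $g(x)\mid x^{n}+1$. The hypothesis $C_s=C_{-s}$ says precisely that the roots of $g$ are closed under inversion, so $g$ is self-reciprocal; Theorem~\ref{thr00} then yields that $\mathcal{C}=\langle g(x)\rangle$ is an LCD negacyclic code, establishing existence.

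For the necessity direction I would begin with an LCD negacyclic code $\mathcal{C}=\langle g(x)\rangle$, invoke Theorem~\ref{thr00} to conclude that $g$ is self-reciprocal, and translate this into $Z=-Z$ as above. The main obstacle lies exactly here: closure of $Z$ under negation only tells us that the cyclotomic cosets occurring in $Z$ are permuted in pairs $C_s\leftrightarrow C_{-s}$, and does not by itself single out one coset with $C_s=C_{-s}$. To reach the stated form one must rule out the configurations in which every such pair is split, and this is where the finer arithmetic of the cosets modulo $2n$ enters: when $n$ is odd the residue $s=n\in O_{2,n}\left(1\right)$ satisfies $-n\equiv n$ and hence furnishes an automatic self-paired coset, whereas for even $n$ no such distinguished odd residue exists and the negation action can pair the cosets freely. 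I expect this coset-pairing analysis, rather than the algebraic translation from Theorem~\ref{thr00}, to be the delicate step of the argument.
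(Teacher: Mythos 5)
Your first paragraph (the translation of Theorem \ref{thr00} into the condition $Z=-Z$ on the defining set) and your constructive proof of the ``if'' direction are correct: when $C_s=C_{-s}$, the polynomial $g(x)=\prod_{j\in C_s}\left(x-\delta^{j}\right)$ lies in $\mathbb{F}_q[x]$, divides $x^n+1$, and has root set closed under inversion, so Theorem \ref{thr00}(3) gives an LCD negacyclic code. (Strictly, root-closure under inversion gives $g^{\ast}=\pm g$ rather than $g^{\ast}=g$, but since condition (3) of Theorem \ref{thr00} is phrased in terms of roots, your appeal to it is sound.) This is essentially all the paper has to offer as well: Corollary \ref{cor00} is stated there as a ``direct result'' of Theorem \ref{thr00} with no written proof.

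The obstacle you flag in the ``only if'' direction is not merely delicate --- it is fatal, because that direction of the corollary is false as printed, and the paper's own results witness this. Take $q=13$, $n=6$, so $q\equiv 1 \pmod{2n}$ and every $q$-cyclotomic coset modulo $12$ is a singleton $\{s\}$ with $s$ odd; then $C_s=C_{-s}$ would force $2s\equiv 0 \pmod{12}$, i.e.\ $6\mid s$, which is impossible for odd $s$, so no self-paired coset exists. Yet nontrivial LCD negacyclic codes of length $6$ over $\mathbb{F}_{13}$ do exist: the paper constructs MDS ones in Theorem \ref{thr1} (e.g.\ the defining set $\{3,5,7,9\}$, which satisfies $Z=-Z$), and Corollary \ref{cor1} counts $2\left(2^{(n-2)/2}-1\right)>0$ of them. (Even under the most literal reading, the full space $\mathbb{F}_q^n$ is always a negacyclic LCD code, which already breaks the equivalence.) Your diagnosis of the dichotomy is exactly right: for odd $n$ the residue $s=n$ is automatically self-paired, so no counterexample can occur there, while for even $n$ the negation action can be fixed-point-free on the cosets even though unions of paired cosets still produce LCD codes. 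So the defensible statement --- the one Theorem \ref{thr00} actually yields and the one the rest of the paper uses --- is: a negacyclic code with defining set $Z$ is LCD if and only if $Z=-Z$, equivalently $C_{-s}\subseteq Z$ whenever $C_s\subseteq Z$. Your first paragraph proves precisely this; the existence formulation printed in Corollary \ref{cor00} is a misstatement, and you were right not to manufacture a proof of it.
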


In the following, we give some necessary information about Hermitian dual of
a negacyclic code over $\mathbb{F}_{q^{2}}.$ As a reference for example Refs. \cite{Yang1,Dinh1} can be used.

The following is an immediate result of Theorem 3.2 in \cite{Yang1}.

\begin{theorem}
\label{thr04} Let $\mathcal{C}=\left\langle g\left( x\right) \right\rangle $
be a negacyclic code over $\mathbb{F}_{q^{2}}.$ Then, the following
statements are equivalent.

\begin{enumerate}
\item $\mathcal{C}$ is an Hermitian LCD code.

\item $g\left( x\right) $ is Hermitian self reciprocal.

\item $\delta ^{-q}$ is a root of $g\left( x\right) $ for every root $\delta
$ of $g\left( x\right) $ over the splitting field of $g\left( x\right) .$
\end{enumerate}
\end{theorem}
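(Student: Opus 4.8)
The plan is to prove the three conditions equivalent by translating each of them into a statement about the zero set of $g(x)$ indexed inside $O_{2,n}(1)\subseteq\mathbb{Z}_{2n}$, exploiting that $g(x)\mid x^{n}+1$ is separable (since $(n,q)=1$), so that a divisor of $x^{n}+1$ is determined up to a nonzero scalar by its roots. Write $x^{n}+1=g(x)h(x)$ and, for $f(x)=\sum_i f_i x^{i}$, let $\bar{f}(x)=\sum_i f_i^{\,q}x^{i}$ be its $q$-conjugate and $f^{*}$ its reciprocal; these two operations commute, so I will call $g$ \emph{Hermitian self-reciprocal} precisely when $\langle g\rangle=\langle \bar{g}^{*}\rangle$. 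Note that $g(0)\neq 0$ because $x\nmid x^{n}+1$, so $\deg \bar{g}^{*}=\deg g$ and no zero is lost under reciprocation.

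First I would dispatch $(2)\Leftrightarrow(3)$ directly from the roots. Raising $g(\delta)=0$ to the $q$-th power (Frobenius is a field homomorphism) gives $\bar{g}(\delta^{q})=0$, and reciprocation inverts nonzero roots; hence the zeros of $\bar{g}^{*}$ are exactly $\{\delta^{-q}:g(\delta)=0\}$. By separability, $\langle g\rangle=\langle \bar{g}^{*}\rangle$ holds if and only if these two zero sets coincide, which is precisely the statement that $\delta^{-q}$ is a zero of $g$ for every zero $\delta$ of $g$. This yields $(2)\Leftrightarrow(3)$.

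For $(1)\Leftrightarrow(3)$ I would pass to defining sets. The key structural input is that the Hermitian dual of a negacyclic code is again negacyclic with $\mathcal{C}^{\perp H}=\langle \bar{h}^{*}\rangle$; this is the immediate consequence of Theorem~3.2 in \cite{Yang1}, and it can also be recovered from the Euclidean identity $\mathcal{C}^{\perp}=\langle h^{*}\rangle$ together with $\mathcal{C}^{\perp H}=\overline{\mathcal{C}^{\perp}}$ (conjugation being a ring automorphism of $\mathbb{F}_{q^{2}}[x]/\langle x^{n}+1\rangle$ that fixes $x$). Letting $Z\subseteq O_{2,n}(1)$ be the defining set of $\mathcal{C}$, the zeros of $h$ are indexed by $O_{2,n}(1)\setminus Z$, so by the root computation above the defining set of $\mathcal{C}^{\perp H}$ is $O_{2,n}(1)\setminus(-qZ)$, where $-qZ=\{-qs \bmod 2n:s\in Z\}$. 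Since $\mathcal{C}\cap\mathcal{C}^{\perp H}$ is the negacyclic code with defining set $Z\cup\bigl(O_{2,n}(1)\setminus(-qZ)\bigr)$, the LCD condition $\mathcal{C}\cap\mathcal{C}^{\perp H}=\{0\}$ amounts to that union being all of $O_{2,n}(1)$, i.e. to $-qZ\subseteq Z$, which read off on exponents is exactly condition $(3)$.

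The hard part will be the bookkeeping in this exponent translation. I must verify that $s\mapsto -qs$ is a well-defined bijection of $O_{2,n}(1)$: this uses that $q$ is odd and $(q,2n)=1$, so $-qs$ remains an odd residue modulo $2n$ and the map is invertible. This bijectivity is exactly what upgrades the containment $-qZ\subseteq Z$ to the equality $-qZ=Z$, and what makes the ``for every root'' phrasing of $(3)$ symmetric in the two directions. The remaining points—that intersection of ideals corresponds to union of defining sets, and that the zero code is the one with full defining set $O_{2,n}(1)$—are routine once separability of $x^{n}+1$ is in hand.
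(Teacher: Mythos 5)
Your proof is correct, but it is far more self-contained than what the paper actually does: the paper offers no argument for this theorem at all, deriving it as an ``immediate result of Theorem 3.2 in \cite{Yang1}'' (Yang and Cai's description of duals of constacyclic codes), in the same way that Theorem \ref{thr00} is simply quoted from \cite{Zhu} for the Euclidean case. Your route re-proves the needed structural input---that $\mathcal{C}^{\bot H}=\left\langle \bar{h}^{\ast }\right\rangle $ where $x^{n}+1=g\left( x\right) h\left( x\right) $---from the Euclidean identity $\mathcal{C}^{\bot }=\left\langle h^{\ast }\right\rangle $ composed with the conjugation automorphism, and then runs both equivalences through roots and defining sets: $(2)\Leftrightarrow (3)$ by separability of $x^{n}+1$ together with the computation that the zeros of $\bar{g}^{\ast }$ are exactly $\left\{ \delta ^{-q}:g\left( \delta \right) =0\right\} $, and $(1)\Leftrightarrow (3)$ by translating $\mathcal{C}\cap \mathcal{C}^{\bot H}=\left\{ 0\right\} $ into $Z\cup \left( O_{2,n}\left( 1\right) \setminus \left( -qZ\right) \right) =O_{2,n}\left( 1\right) $, i.e.\ $-qZ\subseteq Z$. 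The details check out: you correctly flag that $g\left( 0\right) \neq 0$ preserves degree under reciprocation, that $\gcd \left( q,2n\right) =1$ makes $s\mapsto -qs$ a bijection of $O_{2,n}\left( 1\right) $ (so the containment $-qZ\subseteq Z$ upgrades automatically to equality, matching the symmetric phrasing of $(3)$), and that intersections of ideals correspond to unions of defining sets thanks to separability. What your approach buys is a first-principles proof that, as a byproduct, also establishes the paper's Corollary \ref{cor0}; what the paper's citation buys is brevity, at the cost of leaving the reader to verify that the cited result specializes to the three-way equivalence claimed here.
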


The following corollary is a direct result of Theorem \ref{thr04}.

\begin{corollary}
\label{cor0} Hermitian LCD negacyclic codes over $\mathbb{F}_{q^{2}}$ of
length $n$\ exists if and only if $C_{s}=C_{-qs}$ for some $s\in
O_{2,n}\left( 1\right) ,$ where $C_{s}$ is a $q^{2}$-cyclotomic coset modulo
$2n.$
\end{corollary}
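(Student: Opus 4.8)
The plan is to translate the root condition of Theorem~\ref{thr04}(3) into the combinatorics of $q^{2}$-cyclotomic cosets and then read off the stated equivalence. First I would record the structural facts about the index set. Because $g(x)$ has coefficients in $\mathbb{F}_{q^{2}}$, its root set is stable under the Frobenius $x\mapsto x^{q^{2}}$; equivalently, the defining set $Z\subseteq O_{2,n}(1)$ of $\mathcal{C}=\langle g(x)\rangle$ is a union of $q^{2}$-cyclotomic cosets $C_{s}$ modulo $2n$. Next I would check that multiplication by $-q$ is a well-defined involution on these cosets. Since $q$ is odd and $2n$ is even, the map $s\mapsto -qs \pmod{2n}$ sends odd residues to odd residues, so it preserves $O_{2,n}(1)$; moreover $-q\,C_{s}=\{-qsq^{2j}\bmod 2n\}=C_{-qs}$, and applying the map twice multiplies by $(-q)^{2}=q^{2}$, which fixes every $q^{2}$-cyclotomic coset. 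Hence $C_{s}\mapsto C_{-qs}$ is an involution on the set of cosets contained in $O_{2,n}(1)$.

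With this in hand, Theorem~\ref{thr04} translates cleanly. Statement (3) says $\mathcal{C}$ is Hermitian LCD exactly when $\delta^{-qa}$ is a root of $g$ for every root $\delta^{a}$; in terms of exponents this is the condition $Z=-qZ$, i.e. $Z$ is invariant under the involution above. For the \emph{if} direction, suppose $C_{s}=C_{-qs}$ for some $s\in O_{2,n}(1)$. Then the negacyclic code whose defining set is exactly $Z=C_{s}$ satisfies $-qZ=C_{-qs}=C_{s}=Z$, so by Theorem~\ref{thr04} it is a nontrivial Hermitian LCD code, establishing existence. For the \emph{only if} direction, if a Hermitian LCD negacyclic code exists then its defining set $Z$ is a nonempty $-q$-invariant union of cosets; the self-paired coset $C_{s}=C_{-qs}$ is precisely the minimal such invariant block, which is what one exhibits to witness existence.

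The routine part is the verification that $O_{2,n}(1)$ and the cyclotomic cosets behave correctly under multiplication by $-q$ (parity of $q$, together with coprimality of $q$ and $2n$ coming from $(n,q)=1$, which makes $-q$ a bijection of $\mathbb{Z}_{2n}$). The point requiring care is the exact logical shape of the equivalence: invariance of $Z$ under $-q$ can in principle be realized by two-cycles $\{C_{s},C_{-qs}\}$ with $C_{s}\neq C_{-qs}$ as well as by fixed cosets, so I would phrase the corollary as characterizing the existence of the canonical single-coset Hermitian LCD code, for which a self-paired coset is genuinely necessary and sufficient. This is exactly the form used in the constructions that follow, where each family is built by taking $Z$ to be a self-paired coset (or a controlled union of such blocks) and then applying the BCH bound of Theorem~\ref{thr0} to estimate the minimum distance.
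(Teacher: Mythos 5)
Your core translation of Theorem \ref{thr04}(3) into the condition $Z=-qZ$ on defining sets, together with the ``if'' direction (take $Z=C_s$ for a self-paired coset, observe $-qZ=C_{-qs}=C_s=Z$), is exactly the content behind the paper's justification --- the paper offers no proof at all beyond calling the corollary a direct consequence of Theorem \ref{thr04}, so on this part you and the paper coincide. Where you genuinely depart from the paper is in scrutinizing the ``only if'' direction, and you are right to do so: as literally stated, that direction is false. A nonempty proper defining set invariant under $s\mapsto -qs$ can consist entirely of two-cycles $C_s\cup C_{-qs}$ with $C_s\neq C_{-qs}$. Concretely, for $q=3$, $n=4$, the $q^2$-cyclotomic cosets modulo $8$ are the singletons $\{1\},\{3\},\{5\},\{7\}$; the map $s\mapsto -3s \pmod 8$ pairs $1\leftrightarrow 5$ and $3\leftrightarrow 7$ with no fixed coset, yet $Z=\{1,5\}$ defines a nontrivial $\left[4,2\right]$ Hermitian LCD negacyclic code over $\mathbb{F}_{9}$. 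So Hermitian LCD negacyclic codes of that length exist while no $s$ satisfies $C_s=C_{-qs}$.

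Because of this, one sentence in your second paragraph is wrong as written: a self-paired coset is \emph{not} ``precisely the minimal invariant block,'' since a pair $C_s\cup C_{-qs}$ of distinct cosets is also a minimal invariant block. Your subsequent repair --- restating the corollary so that self-pairedness characterizes only the single-coset LCD codes, or equivalently retaining only the ``if'' direction --- is the correct reading, and it is the only part of the corollary the paper ever uses. Indeed, the paper itself implicitly concedes the point: the constructions of Section 4.2 take $\overline{Z}=\overline{Z}_1\cup -q\overline{Z}_1$, an invariant set built from two-cycles, so they rest on the general criterion $Z=-qZ$ from Theorem \ref{thr04} rather than on the self-paired-coset condition of this corollary. (The Euclidean analogue, Corollary \ref{cor00}, has the same defect.) In short: your proof of the statement as printed is necessarily incomplete, but the incompleteness is a flaw of the statement, which you correctly diagnosed and repaired; the paper's ``proof'' simply glosses over it.
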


\section{New MDS LCD codes from negacyclic codes}

In this section, we aim to derive some classes of LCD codes from negacyclic
codes and to show that these codes are MDS. For this reason, we first
determine that the defining set $Z$ of negacyclic codes should satisfy $Z=-Z,$ and contain consecutive terms. Then, we construct MDS negacyclic LCD
codes from these negacyclic codes with defining set $Z$.

\subsection{New MDS negacyclic LCD codes of length $n$, where $n|\frac{{q-1}%
}{2}$}

Let $q$ be an odd prime power and let $n|\frac{{q-1}}{2},$ where $n\geq 3$.
It is clear that $q\equiv 1$ $(mod$ $2n)$ and then each $q$-cyclotomic coset
modulo $2n$ has exactly one element i.e., ${C_{1+2j}}=\left\{ {1+2j}\right\}
$ for all $0\leq j\leq n-1$. We also have the following result.

\begin{lemma}
\label{lm1} For all $0\leq j\leq n-1$, $-{C_{1+2j}}={C_{1+2\left( {n-j-1}%
\right) }}.$ Moreover, if $n$ is odd and $j=\frac{{n-1}}{2},$ then $-{%
C_{1+2j}}={C_{1+2j}}.$
\end{lemma}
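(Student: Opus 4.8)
The plan is to exploit the observation made in the paragraph preceding the lemma: the hypothesis $n\mid\frac{q-1}{2}$ forces $q\equiv 1\ (\mathrm{mod}\ 2n)$, so every $q$-cyclotomic coset modulo $2n$ collapses to a singleton, $C_{1+2j}=\{1+2j\}$ for each $0\le j\le n-1$. Because of this, the claimed coset identity is not really a statement about sets closed under multiplication by $q$; it reduces to a single congruence in $\mathbb{Z}_{2n}$, namely that $-(1+2j)$ and $1+2(n-j-1)$ represent the same residue.

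First I would compute the negated element directly. Since $C_{1+2j}=\{1+2j\}$, the set $-C_{1+2j}$ is the singleton $\{-(1+2j)\bmod 2n\}$, and
\[
-(1+2j)\equiv 2n-1-2j=1+2(n-j-1)\pmod{2n}.
\]
Because $0\le j\le n-1$ gives $0\le n-j-1\le n-1$, the integer $1+2(n-j-1)$ lies in $O_{2,n}\left(1\right)$ and is the canonical representative of its own singleton coset. Hence $-C_{1+2j}=\{1+2(n-j-1)\}=C_{1+2(n-j-1)}$, which is exactly the first assertion.

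For the second assertion I would specialize the identity just obtained. When $n$ is odd, $j=\frac{n-1}{2}$ is an integer, and substituting it gives $n-j-1=n-\frac{n-1}{2}-1=\frac{n-1}{2}=j$. Plugging this into $-C_{1+2j}=C_{1+2(n-j-1)}$ yields $-C_{1+2j}=C_{1+2j}$, as required, so this value of $j$ is the (unique) fixed index of the negation map.

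The argument presents no genuine obstacle; the only point deserving a moment's care is to confirm that reducing $-(1+2j)$ modulo $2n$ lands back inside the index range $\{1+2i: 0\le i\le n-1\}$, so that the resulting residue is recognized as the honest cyclotomic coset $C_{1+2(n-j-1)}$ rather than merely an element of $\mathbb{Z}_{2n}$. Once the singleton structure of the cosets is in hand, everything else is elementary modular arithmetic.
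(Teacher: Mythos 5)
Your proof is correct and follows essentially the same route as the paper's: the single congruence $-(1+2j)\equiv 1+2(n-j-1)\pmod{2n}$, combined with the singleton coset structure noted before the lemma, and the substitution $j=\frac{n-1}{2}$ giving $n-j-1=j$ for the odd case. Your added check that the representative $1+2(n-j-1)$ lies in $O_{2,n}(1)$ is a small refinement the paper leaves implicit, but the argument is the same.
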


\begin{proof}
$-\left( {1+2j}\right) \equiv 2n-\left( {1+2j}\right) =1+2\left( {n-j-1}%
\right) $ $(mod$ $2n).$ If $n$ is odd and $j=\frac{{n-1}}{2}$, then $j=n-j-1.
$
\end{proof}

In the following, we give the number of LCD negacyclic codes of length $n$
without proof.

\begin{corollary}
\label{cor1} If $n$ is even, then the number of nontrivial LCD negacyclic
codes of length $n$ is $2\left( {{2^{\frac{{n-2}}{2}}}-1}\right) .$ If $n$
is odd, then the number of nontrivial LCD negacyclic codes of length $n$ is $%
2\left( {{2^{\frac{{n-1}}{2}}}-1}\right) .$
\end{corollary}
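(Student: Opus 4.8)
The plan is to translate the LCD property into a symmetry condition on the defining set and then count by orbits of the negation map. Since $n \mid \frac{q-1}{2}$ forces $q \equiv 1 \pmod{2n}$, every $q$-cyclotomic coset modulo $2n$ is a singleton, as already noted before Lemma \ref{lm1}. Hence a negacyclic code of length $n$ over $\mathbb{F}_q$ is in bijection with its defining set, which may be an arbitrary subset $Z \subseteq O_{2,n}(1)$: no union-of-cosets restriction survives. By Theorem \ref{thr00} the code $\langle g(x)\rangle$ is LCD if and only if $g$ is self-reciprocal, i.e. the root set is closed under inversion; passing to defining sets, inversion of $\delta^{s}$ to $\delta^{-s}$ corresponds to negation modulo $2n$, so this is exactly the condition $Z = -Z$, which is also the content of Corollary \ref{cor00}. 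The problem therefore reduces to counting negation-invariant subsets of $O_{2,n}(1)$.

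For the count I would invoke Lemma \ref{lm1}: negation acts on $O_{2,n}(1)$ by $1+2j \mapsto 1+2(n-j-1)$, so its orbits are the transposed pairs $\{1+2j,\,1+2(n-j-1)\}$ together with at most one fixed point. A subset is negation-invariant precisely when it is a union of orbits, so the number of such subsets equals $2^{(\#\text{orbits})}$. It then remains only to count orbits according to the parity of $n$.

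When $n$ is even, the equation $j = n-j-1$ has no integer solution, so the $n$ elements split into $n/2$ pairs, yielding $2^{n/2}$ invariant subsets; removing the two trivial codes $Z=\emptyset$ (the whole space $\mathbb{F}_q^n$) and $Z=O_{2,n}(1)$ (the zero code), both vacuously negation-invariant, leaves $2^{n/2}-2 = 2\bigl(2^{(n-2)/2}-1\bigr)$. When $n$ is odd, the second assertion of Lemma \ref{lm1} supplies the unique fixed point at $j=\frac{n-1}{2}$, so there are $\frac{n-1}{2}$ pairs plus one fixed point, i.e. $\frac{n+1}{2}$ orbits and $2^{(n+1)/2}$ invariant subsets; subtracting the same two trivial codes gives $2^{(n+1)/2}-2 = 2\bigl(2^{(n-1)/2}-1\bigr)$. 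Both expressions coincide with the stated formulas.

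The argument is essentially bookkeeping, so the care lies in two places: justifying that the singleton-coset structure lets $Z$ range over all subsets (so the LCD codes are exactly the negation-invariant subsets, with no residual arithmetic constraint), and handling the fixed orbit correctly in the odd case. The step most prone to error is the orbit count, where one must confirm that $j=\frac{n-1}{2}$ is the only self-paired index and that it contributes a single orbit rather than a pair; this is precisely why Lemma \ref{lm1} records the self-pairing as a separate statement, and why the two parity cases produce exponents differing by the expected half-integer shift.
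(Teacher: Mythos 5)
Your proof is correct. Note that the paper itself offers no proof here (it explicitly states the count is given ``without proof''), so there is nothing to compare against; your argument---singleton cyclotomic cosets making every subset $Z\subseteq O_{2,n}(1)$ a valid defining set, the LCD condition reducing to $Z=-Z$, orbit-counting under negation via Lemma \ref{lm1} with the unique fixed point $j=\frac{n-1}{2}$ in the odd case, and subtraction of the two trivial codes---is precisely the bookkeeping the authors intended Lemma \ref{lm1} to supply, and it verifies both stated formulas.
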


For odd $n,$ we adjust the defining set
\begin{equation*}
{Z_{1}}=\bigcup\limits_{i=0}^{\lambda }\left( {{C_{1+2\left( {\frac{{n-1}}{2}%
+i}\right) }\cup C_{1+2\left( {\frac{{n-1}}{2}-i}\right) }}}\right) ,
\end{equation*}%
where $0\leq \lambda \leq \frac{{n-3}}{2}.$

For even $n,$ we establish the defining set
\begin{equation*}
{Z_{2}}=\bigcup\limits_{i=0}^{\lambda}\left( {{C_{1+2\left( {\frac{n}{2}+i}\right)
}}\cup {C_{1+2\left( {\frac{n}{2}-1-i}\right) }}}\right) ,
\end{equation*}%
where $1\leq \lambda\leq \frac{{n-2}}{2}.$ By Lemma \ref{lm1}, it is immediate
that $-{Z_{1}}=Z_{1}$ and $-{Z_{2}}=Z_{2}$ for each $0\leq \lambda \leq
\frac{{n-3}}{2}$ and $1\leq \lambda\leq \frac{{n-2}}{2},$ respectively. Now we are
ready to introduce two new classes of LCD negacyclic codes of length $n$
which are MDS.

In the following we give generalized version of Theorem 6.1 in \cite{Zhu}.

\begin{theorem}
\label{thr1} Let $q$ be an odd prime power and let $n|\frac{{q-1}}{2}$ such
that $n\neq 1.$ For even $n>2,$ a class of MDS negacyclic LCD codes with the
parameters ${\left[ {n,n-2\lambda-2 ,2\lambda +3}\right] _{q}},$ where $1\leq
\lambda \leq \frac{{n-2}}{2},$ exists. For odd $n,$ a class of MDS
negacyclic LCD codes with the parameters ${\left[ {n,n-2\lambda -1,2\left( {%
\lambda +1}\right) }\right] _{q}},$ where $0\leq \lambda \leq \frac{{n-3}}{2}%
,$ exists.
\end{theorem}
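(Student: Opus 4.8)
The plan is to treat the two parities of $n$ in parallel and, in each case, to verify three things about the negacyclic code whose defining set is $Z_1$ (odd $n$) or $Z_2$ (even $n$): that it is LCD, what its dimension is, and that its minimum distance equals the Singleton value. The mechanism throughout is that $n\mid\frac{q-1}{2}$ forces $q\equiv 1\pmod{2n}$, so every $q$-cyclotomic coset modulo $2n$ is a singleton; this collapses all the coset bookkeeping into plain counting of exponents.

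First I would settle the LCD property. Since the cosets are singletons, $-C_{1+2j}=C_{-(1+2j)}=C_{1+2(n-j-1)}$ by Lemma \ref{lm1}, and the symmetry $-Z_1=Z_1$, $-Z_2=Z_2$ has already been recorded. Consequently $\delta^{-1}$ is a root of the generator polynomial whenever $\delta$ is, so Theorem \ref{thr00} identifies each of these negacyclic codes as LCD.

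Next comes the dimension via $k=n-|Z|$. For odd $n$, writing $m=\frac{n-1}{2}$, the index $i=0$ makes the two cosets $C_{1+2(m\pm 0)}$ coincide at $C_{1+2m}$, while each $i=1,\dots,\lambda$ contributes the two genuinely distinct exponents $1+2(m\pm i)$; a short check that the ``$+$'' and ``$-$'' families are otherwise disjoint gives $|Z_1|=2\lambda+1$ and hence $k=n-2\lambda-1$. For even $n$, writing $m=\frac{n}{2}$, one verifies that $C_{1+2(m+i)}$ and $C_{1+2(m-1-i)}$ never collide as $i$ ranges over $0,\dots,\lambda$ (this uses $i+i'+1\not\equiv 0\pmod n$ for the admissible $i,i'$), so $|Z_2|=2\lambda+2$ and $k=n-2\lambda-2$, matching the stated parameters. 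For the distance I would then read off the block of consecutive exponents sitting inside the defining set: for odd $n$ the exponents $1+2j$ in $Z_1$ are exactly those with $m-\lambda\le j\le m+\lambda$, a run of $2\lambda+1$ consecutive values, whence Theorem \ref{thr0} gives $d\ge 2\lambda+2=2(\lambda+1)$; for even $n$ the exponents in $Z_2$ are those with $m-1-\lambda\le j\le m+\lambda$, a run of $2\lambda+2$ consecutive values, giving $d\ge 2\lambda+3$. The Singleton bound $d\le n-k+1$ equals $2(\lambda+1)$ in the odd case and $2\lambda+3$ in the even case, so the two inequalities pinch $d$ to the asserted value and the codes are MDS.

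The counting and the BCH--Singleton squeeze are essentially mechanical once the cosets are singletons; the one point that genuinely needs care is confirming that the consecutive-exponent block lies wholly inside $O_{2,n}(1)$ with no wraparound modulo $2n$, since otherwise Theorem \ref{thr0} would not apply. This is exactly where the stated ranges $0\le\lambda\le\frac{n-3}{2}$ (odd) and $1\le\lambda\le\frac{n-2}{2}$ (even) do their work: they keep the smallest index $j$ at least $0$ and the largest at most $n-1$, so I expect verifying these endpoint inequalities to be the only delicate step of the argument.
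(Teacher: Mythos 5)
Your proof is correct and follows essentially the same route as the paper's: take the negacyclic codes with defining sets $Z_1$ (odd $n$) and $Z_2$ (even $n$), count the consecutive exponents to get the dimension $n-|Z|$ and the BCH lower bound on $d$, pinch against the Singleton bound for MDS, and use $-Z_i=Z_i$ together with Theorem \ref{thr00} for the LCD property. The only difference is that you spell out the details the paper leaves implicit (singleton cosets, disjointness of the two coset families, and the endpoint checks ensuring the exponent block stays within $0\le j\le n-1$), which is a welcome tightening rather than a new approach.
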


\begin{proof}
Let $n$ be even and, for each $1\leq {\lambda }\leq \frac{{n-2}}{2},$ define
$\mathcal{C}$ to be a negacyclic code of length $n$ having the defining set $%
Z_{2}$ over $\mathbb{F}_{q}$. Since, $Z_{2}$ consists of $2\lambda+2$
consecutive terms, where $1\leq \lambda \leq \frac{{n-2}}{2},$ the dimension
of $\mathcal{C}$ is $n-2\lambda-2 ,$ and by Theorem \ref{thr0} the minimum
distance of $\mathcal{C}$ is at least $2\lambda +3.$ Since $-Z_{2}=Z_{2}$
and $\mathcal{C}$ holds the definition of MDS codes, for each $1\leq {%
\lambda }\leq \frac{{n-2}}{2},$ $\mathcal{C}$ is MDS negacyclic LCD codes
with desired parameters.

Let $n$ be odd and, for each $0\leq \lambda \leq \frac{{n-3}}{2},$ define $%
\mathcal{C}$ to be a negacyclic code of length $n$ having the defining set $%
Z_{1}$ over $\mathbb{F}_{q}$. Since, $Z_{1}$ consists of $2\lambda +1$
consecutive terms, where $0\leq \lambda \leq \frac{{n-3}}{2},$ the dimension
of $\mathcal{C}$ is $n-2\lambda -1,$ and by Theorem \ref{thr0} the minimum
distance of $\mathcal{C}$ is at least $2\lambda +2$.$.$ Since $-Z_{1}=Z_{1}$
and $\mathcal{C}$ holds the definition of MDS codes, for each $0\leq \lambda
\leq \frac{{n-3}}{2},$ $\mathcal{C}$ is MDS negacyclic LCD codes with
desired parameters.
\end{proof}

\begin{example}
We present some parameters of MDS negacyclic LCD codes obtained by Theorem %
\ref{thr1} in Table \ref{table1}.
\end{example}
\begin{table}
\caption{Some MDS negacyclic LCD codes obtained from Theorem \protect\ref%
{thr1}}
\label{table1}\centering
\begin{tabular}{cccc}
\hline $q$ & $n$ & $\lambda $ & MDS Negacyclic LCD Codes \\
\hline $7$ & $3$ & $0$ & ${\left[ {3,2,2}\right] _{7}}$ \\
$9$ & $4$ & $1$ & ${\left[ {4,2,3}\right] _{9}}$ \\
$11$ & $5$ & $0\leq \lambda \leq 1$ & ${\left[ {5,4-2\lambda ,2\left( {%
\lambda +1}\right) }\right] _{11}}$ \\
$13$ & $6$ & $1\leq \lambda \leq 2$ & ${\left[ {6,6-2\lambda ,2\lambda +1}%
\right] _{13}}$ \\
$17$ & $8$ & $1\leq \lambda \leq 3$ & ${\left[ {8,8-2\lambda ,2\lambda +1}%
\right] _{17}}$ \\
$17$ & $4$ & $1$ & ${\left[ {4,2,3}\right] _{17}}$ \\
$19$ & $9$ & $0\leq \lambda \leq 3$ & ${\left[ {9,8-2\lambda ,2\left( {%
\lambda +1}\right) }\right] _{19}}$ \\
$19$ & $3$ & $0$ & ${\left[ {3,2,2}\right] _{19}}$ \\
\hline
\end{tabular}%
\end{table}

\subsection{New MDS negacyclic LCD codes of length $n$, where $n|\frac{{q + 1%
}}{2}$}

Let $q$ be an odd prime power and let $n|\frac{{q+1}}{2}$ such that $n\geq 3.
$ In this case, since $q\equiv -1$ $(mod$ $2n)$, each $q$-cyclotomic coset
modulo $2n$ has at most two elements. We give all $q$-cyclotomic cosets
modulo $2n$.

\begin{lemma}
\label{lm2} All $q$-cyclotomic cosets modulo $2n$ are given below.

\begin{enumerate}
\item If $n$ is even, then each cyclotomic coset has exactly two elements,
i.e., the cyclotomic coset ${C_{1 + 2j}}$ is the set $\left\{ {1 + 2j,\,1 +
2\left( {n - 1 - j} \right)} \right\}$ for all $0 \le j \le \frac{n}{2} - 1$.

\item If $n$ is odd, then each cyclotomic coset has exactly two elements
except for one, i.e., ${C_{1+2j}}=\left\{ {1+2j,\,1+2\left( {n-1-j}\right) }%
\right\} $ for all $0\leq j\leq \frac{n-3}{2},$ but ${C_{1+2j}}=\left\{ {1+2j%
}\right\} $ for $j=\frac{{n-1}}{2}.$
\end{enumerate}
\end{lemma}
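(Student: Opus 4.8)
The plan is to exploit the congruence $q\equiv -1\ (\mathrm{mod}\ 2n)$, which is immediate from the hypothesis $n\mid\frac{q+1}{2}$ (write $q+1=2nk$). The first step is to record the resulting shape of every $q$-cyclotomic coset modulo $2n$: since $q\equiv -1$ we have $sq\equiv -s$ and $sq^{2}\equiv s\ (\mathrm{mod}\ 2n)$ for every $s$, so that $C_{s}=\{\,s\bmod 2n,\,-s\bmod 2n\,\}$ and hence each coset contains at most two elements. This is precisely the claim made in the text just before the lemma, and I would take it as the starting point.

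Next I would specialize to the representatives $s=1+2j\in O_{2,n}\left(1\right)$. Exactly as in the proof of Lemma \ref{lm1}, one computes $-(1+2j)\equiv 2n-(1+2j)=1+2\left(n-1-j\right)\ (\mathrm{mod}\ 2n)$, which yields the description $C_{1+2j}=\left\{1+2j,\,1+2\left(n-1-j\right)\right\}$ asserted in both parts of the statement. It then only remains to decide when this two-element set actually collapses to a singleton.

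The crux is the parity argument governing the coset size. The coset $C_{1+2j}$ has a single element exactly when $1+2j\equiv 1+2\left(n-1-j\right)\ (\mathrm{mod}\ 2n)$; simplifying this reduces to $4j+2\equiv 0\ (\mathrm{mod}\ 2n)$, i.e. $n\mid 2j+1$. Since $2j+1$ is odd, this is impossible whenever $n$ is even, so for even $n$ every coset has exactly two elements, giving part (1). When $n$ is odd, the unique index with $0\leq j\leq n-1$ satisfying $n\mid 2j+1$ is the solution of $2j+1=n$, namely $j=\frac{n-1}{2}$, which produces the single singleton coset $C_{1+2j}=\{1+2j\}$ and establishes part (2).

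Finally I would check that the stated index ranges list each coset once and exhaust $O_{2,n}\left(1\right)$. The involution $j\mapsto n-1-j$ on $\{0,\ldots,n-1\}$ records which representatives lie in the same coset: for even $n$ it has no fixed point and pairs the indices into $\frac{n}{2}$ disjoint pairs represented by $0\leq j\leq\frac{n}{2}-1$, while for odd $n$ its unique fixed point is $j=\frac{n-1}{2}$, leaving $\frac{n-1}{2}$ pairs represented by $0\leq j\leq\frac{n-3}{2}$ together with the singleton. These match the ranges in the statement. I expect the only step needing care to be this parity-and-counting bookkeeping; the underlying modular arithmetic is routine.
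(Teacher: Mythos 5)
Your proposal is correct and follows essentially the same route as the paper: both rest on $q\equiv -1\ (\mathrm{mod}\ 2n)$ and the computation $q(1+2j)\equiv -(1+2j)\equiv 1+2(n-1-j)\ (\mathrm{mod}\ 2n)$, with the singleton coset arising exactly at the fixed point $j=\frac{n-1}{2}$ when $n$ is odd. Your additional parity argument ($n\mid 2j+1$ forces $n$ odd) and the counting of representatives via the involution $j\mapsto n-1-j$ merely make explicit what the paper leaves implicit, so this is the same proof, written more carefully.
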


\begin{proof}
Since $q\equiv -1$ $(mod$ $2n)$, we get $q\left( {1+2j}\right) \equiv
-1-2j=1+2\left( {n-1-j}\right) $ $(mod$ $2n)$. If $n$ is odd and $j=\frac{{%
n-1}}{2}$, then $j=n-1-j$ and ${C_{1+2j}}=\left\{ {1+2j}\right\} .$
\end{proof}

The following is immediately concluded from Lemma \ref{lm2}.

\begin{corollary}
\label{cor2} For all $q$-cyclotomic cosets modulo $2n$ containing $1+2j$, $-{%
C_{1+2j}}={C_{1+2j}}.$
\end{corollary}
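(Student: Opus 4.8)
The plan is to read off the explicit description of each $q$-cyclotomic coset from Lemma \ref{lm2} and then verify that the negation map permutes the elements within each coset rather than carrying it to a different one. The single computational fact needed is the congruence already recorded in the proof of Lemma \ref{lm1}, namely
\begin{equation*}
-(1+2j)\equiv 2n-(1+2j)=1+2\left( n-1-j\right)\pmod{2n}.
\end{equation*}
This says that negating the representative $1+2j$ produces exactly the integer $1+2(n-1-j)$, which is precisely the companion element appearing in the coset description of Lemma \ref{lm2}.

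First I would treat the generic two-element case. For $n$ even with $0\le j\le \frac n2-1$, and for $n$ odd with $0\le j\le\frac{n-3}{2}$, Lemma \ref{lm2} gives $C_{1+2j}=\{1+2j,\,1+2(n-1-j)\}$. By the displayed congruence, $-(1+2j)\equiv 1+2(n-1-j)$ and, applying the same formula to the second element, $-(1+2(n-1-j))\equiv 1+2j$. Hence negation merely interchanges the two listed elements, so $-C_{1+2j}=\{1+2(n-1-j),\,1+2j\}=C_{1+2j}$.

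Next I would dispatch the exceptional singleton case. When $n$ is odd and $j=\frac{n-1}{2}$, Lemma \ref{lm2} gives $C_{1+2j}=\{1+2j\}$; here $n-1-j=j$, so the congruence yields $-(1+2j)\equiv 1+2j\pmod{2n}$, and the one-element coset is fixed. Combining the two cases shows $-C_{1+2j}=C_{1+2j}$ for every coset, which is the claim.

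There is no genuine obstacle here: the corollary is an immediate unpacking of Lemma \ref{lm2}, and the only point requiring care is bookkeeping the parity cases and confirming that the negation formula lands on the companion element named in the lemma rather than on a representative of some other coset. Since that companion element is exactly $1+2(n-1-j)$, closure under negation is forced.
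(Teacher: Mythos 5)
Your proof is correct and matches the paper's intent exactly: the paper gives no explicit argument, stating only that the corollary is ``immediately concluded from Lemma \ref{lm2},'' and your write-up is precisely that unpacking, using the congruence $-(1+2j)\equiv 1+2(n-1-j)\pmod{2n}$ to check that negation permutes the elements of each coset listed in the lemma (including the singleton case $j=\frac{n-1}{2}$). Nothing is missing; at most one could note the even quicker observation that $q\equiv -1\pmod{2n}$ gives $-C_{s}=qC_{s}=C_{s}$ directly, but your element-wise verification is equally valid.
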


We adjust the defining sets $Z_{1}$ and $Z_{2}$ with respect to the cases of
$n$ as follows: If $n$ is even, then ${Z_{1}}=\bigcup\limits_{j=\lambda }^{%
\frac{n}{2}-1}{{C_{1+2j},}}$ where $1\leq \lambda \leq \frac{n}{2}-1$. If $n$
is odd, then ${Z_{2}}=\bigcup\limits_{j=\lambda }^{\frac{{n-1}}{2}}{{%
C_{1+2j},}}$ where $1\leq \lambda \leq \frac{{n-1}}{2}.$
\newline
Then, from Lemma \ref{lm2}, we get
\begin{eqnarray*}
{Z_{1}} &=&\left\{ {1+2\lambda ,\,1+2\left( \lambda {+1}\right) ,\ldots
,1+2\left( {\frac{n}{2}-1}\right) ,\ldots ,1+2\left( {n-1-}\lambda \right) }%
\right\} ,\text{ }1\leq \lambda \leq \frac{n}{2}-1 \\
{Z_{2}} &=&\left\{ {1+2\lambda ,\,1+2\left( \lambda {+1}\right) ,\ldots
,1+2\left( {\frac{{n-1}}{2}}\right) ,\ldots ,1+2\left( {n-1-}\lambda \right)
}\right\} ,\text{ }1\leq \lambda\leq \frac{{n-1}}{2}.
\end{eqnarray*}%
So, $Z_{1}$ and $Z_{2}$ consists of exactly $n-2\lambda $ consecutive terms.
Moreover, Corollary \ref{cor2} implies that $-{Z_{1}}={Z_{1}}$ and $-{Z_{2}}=%
{Z_{2}}$. Now, we are ready to give new MDS negacyclic LCD codes of length $n
$ dividing $\frac{{q+1}}{2}$.

\begin{theorem}
\label{thr2} Let $q$ be an odd prime power and let $n|\frac{{q+1}}{2}$ such
that $n\neq 1$. For even $n>2,$ a family of MDS negacyclic LCD codes with
the parameters ${\left[ {n,2}\lambda {,n-2\lambda +1}\right] _{q}},$ where $%
1\leq \lambda \leq \frac{{n}}{2}-1,$ exists. For odd $n,$ a family of MDS
negacyclic LCD codes with the parameters ${\left[ {n,2\lambda ,n-2\lambda +1}%
\right] _{q}},$ where $1\leq \lambda \leq \frac{{n-1}}{2},$ exists.
\end{theorem}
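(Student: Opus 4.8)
The plan is to mirror the proof of Theorem \ref{thr1}, handling the two parities of $n$ separately and in each case taking $\mathcal{C}$ to be the negacyclic code over $\mathbb{F}_q$ whose defining set is the one already prepared in the paragraph preceding the statement: the set $Z_1$ when $n$ is even and the set $Z_2$ when $n$ is odd. For each such $\mathcal{C}$ there are exactly three things to verify, namely that $\dim\mathcal{C}=2\lambda$, that the minimum distance is precisely $n-2\lambda+1$, and that $\mathcal{C}$ is LCD. Most of the substantive work has been front-loaded into Lemma \ref{lm2} and Corollary \ref{cor2}, so the argument should be brief.

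First I would fix the size of the defining set. In the even case, Lemma \ref{lm2} already shows that $Z_1=\{1+2\lambda,\,1+2(\lambda+1),\ldots,1+2(n-1-\lambda)\}$ is a block of $n-2\lambda$ consecutive odd residues modulo $2n$; in the odd case the same counting (the $\tfrac{n-1}{2}-\lambda+1$ ``lower'' representatives together with their $\tfrac{n-2\lambda-1}{2}$ partners supplied by Lemma \ref{lm2}) gives $|Z_2|=n-2\lambda$, and again these are consecutive. Since the dimension of a negacyclic code of length $n$ equals $n-|Z|$, this yields $\dim\mathcal{C}=n-(n-2\lambda)=2\lambda$ in both parities.

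Next I would read off the minimum distance. Because $Z_1$ (respectively $Z_2$) contains the $n-2\lambda$ consecutive terms $\{1+2j:\lambda\le j\le n-1-\lambda\}$, Theorem \ref{thr0} applied with $d-1=n-2\lambda$ gives $d\ge n-2\lambda+1$. The Singleton bound forces $d\le n-k+1=n-2\lambda+1$, so the two inequalities coincide, $d=n-2\lambda+1$, and $\mathcal{C}$ is MDS with the asserted parameters.

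Finally, the LCD property is immediate from the symmetry of the defining set: Corollary \ref{cor2} gives $-C_{1+2j}=C_{1+2j}$ for every coset, hence $-Z_1=Z_1$ and $-Z_2=Z_2$, and Theorem \ref{thr00} then certifies that $\mathcal{C}$ is LCD. I do not anticipate any genuine obstacle in this theorem; the only delicate point is the bookkeeping that $|Z_1|=|Z_2|=n-2\lambda$ and that both sets are honest intervals of consecutive odd residues, which is exactly the content of Lemma \ref{lm2} and Corollary \ref{cor2} and therefore already in hand.
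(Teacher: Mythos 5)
Your proposal is correct and follows essentially the same route as the paper's proof: the defining sets $Z_1$ (even $n$) and $Z_2$ (odd $n$) together with Lemma \ref{lm2} and Corollary \ref{cor2}, dimension $n-|Z|=2\lambda$, the BCH bound (Theorem \ref{thr0}) for $d\ge n-2\lambda+1$, and $-Z=Z$ with Theorem \ref{thr00} for the LCD property. The only difference is cosmetic: you invoke the Singleton bound explicitly to pin down $d=n-2\lambda+1$, whereas the paper leaves this step implicit in the phrase that $\mathcal{C}$ ``holds the definition of MDS codes.''
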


\begin{proof}
Let $n$ be even and, for each $1\leq \lambda \leq \frac{{n}}{2}-1,$ define $%
\mathcal{C}$ to be a negacyclic code of length $n$ having the defining set $%
Z_{1}$ over $\mathbb{F}_{q}$. Since, $Z_{1}$ consists of $n-2\lambda $
consecutive terms, where $1\leq \lambda \leq \frac{{n-2}}{2},$the dimension
of $\mathcal{C}$ is $2\lambda ,$ and by Theorem \ref{thr0} the minimum
distance of $\mathcal{C}$ is at least $n-2\lambda +1.$ Since $-Z_{1}=Z_{1}$
and $\mathcal{C}$ holds the definition of MDS codes, for each $1\leq \lambda
\leq \frac{{n}}{2}-1,$ $\mathcal{C}$ is MDS negacyclic LCD codes with
desired parameters.

Let $n$ be odd and, for each $1\leq \lambda \leq \frac{{n-1}}{2},$ define $%
\mathcal{C}$ to be a negacyclic code of length $n$ having the defining set $%
Z_{2}$ over $\mathbb{F}_{q}$. Since, $Z_{2}$ consists of $n-2\lambda $
consecutive terms, where $1\leq \lambda \leq \frac{{n-1}}{2},$the dimension
of $\mathcal{C}$ is $2\lambda ,$ and by Theorem \ref{thr0} the minimum
distance of $\mathcal{C}$ is at least $n-2\lambda +1.$ Since $-Z_{2}=Z_{2}$
and $\mathcal{C}$ holds the definition of MDS codes, for each $1\leq \lambda
\leq \frac{{n-1}}{2},$ $\mathcal{C}$ is MDS negacyclic LCD codes with
desired parameters.
\end{proof}

\begin{example}
We present some parameters of MDS negacyclic LCD codes obtained by Theorem %
\ref{thr2} in Table \ref{table2}.
\end{example}

\begin{table}
\caption{Some MDS negacyclic LCD codes obtained from Theorem \protect\ref%
{thr2}}
\label{table2}\centering
\begin{tabular}{cccc}
\hline $q$ & $n$ & $\lambda $ & MDS Negacyclic LCD Codes \\
\hline $5$ & $3$ & $1$ & ${\left[ {3,2,2}\right] _{5}}$ \\
$7$ & $4$ & $1$ & ${\left[ {4,2,3}\right] _{7}}$ \\
$9$ & $5$ & $1\leq \lambda \leq 2$ & ${\left[ {5,2\lambda ,5-2\lambda +1}%
\right] _{9}}$ \\
$11$ & $6$ & $1\leq \lambda \leq 2$ & ${\left[ {6,2\lambda ,6-2\lambda +1}%
\right] _{11}}$ \\
$11$ & $3$ & $1$ & ${\left[ {3,2,2}\right] _{11}}$ \\
$13$ & $7$ & $1\leq \lambda \leq 3$ & ${\left[ {7,2\lambda ,7-2\lambda +1}%
\right] _{13}}$ \\
$17$ & $9$ & $1\leq \lambda \leq 4$ & ${\left[ {9,2\lambda ,9-2\lambda +1}%
\right] _{17}}$ \\
$17$ & $3$ & $1$ & ${\left[ {3,2,2}\right] _{17}}$ \\
$19$ & $10$ & $1\leq \lambda \leq 4$ & ${\left[ {10,2\lambda ,10-2\lambda +1}%
\right] _{19}}$ \\
$19$ & $5$ & $1\leq \lambda \leq 2$ & ${\left[ {5,2\lambda ,5-2\lambda +1}%
\right] _{19}}$ \\
\hline
\end{tabular}
\end{table}

\subsection{New negacyclic LCD codes of length $n=q+1$ with $\left.
4\right\vert n$}

Let $q$ be an odd prime power and let $n=q+1$ such that $4|n$. In this
subsection, we derive negacyclic codes of length $q+1$ which do not have to
be MDS. It follows from $q\not\equiv 1$ $(mod$ $2\left( {q+1}\right) )$ and $%
{q^{2}}\equiv 1$ $(mod$ $2\left( {q+1}\right) )$ that each $q$-cyclotomic
coset modulo $2n$ has at most two elements. Suppose $q\left( {1+2j}\right)
\equiv \left( {1+2j}\right) $ $(mod$ $2\left( {q+1}\right) )$ for some $%
0\leq j\leq n-1.$ Since $\left( {q-1,q+1}\right) =2$ and $1\leq 1+2j\leq
2n-1,$ we get $2j=q,$ which is a contradiction. This implies that each $q$%
-cyclotomic coset modulo $2\left( {q+1}\right) $ has precisely two elements.
We give an exact characterization for all $q$-cyclotomic cosets modulo $2n$.

\begin{lemma}
\label{lm3} All $q$-cyclotomic cosets modulo $2\left({q+1}\right)$ are as
follows.

\begin{enumerate}
\item ${C_{1+2j}}=\left\{ {1+2j,\,1+2\left( {\frac{{q-1}}{2}-j}\right) }%
\right\} ,$ for all $0\leq j\leq \frac{{q-3}}{4}.$

\item ${C_{1+2j}}=\left\{ {1+2j,\,1+2\left( {n+\frac{{q-1}}{2}-j}\right) }%
\right\} ,$ for all $\frac{{q+1}}{2}\leq j\leq \frac{{3q-1}}{4}.$
\end{enumerate}
\end{lemma}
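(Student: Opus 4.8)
The statement to prove is Lemma~\ref{lm3}, which claims an explicit description of all $q$-cyclotomic cosets modulo $2(q+1)$ when $q$ is an odd prime power, $n=q+1$, and $4\mid n$. We already know from the paragraph preceding the lemma that each such coset has precisely two elements, since $q\not\equiv 1$ and $q^{2}\equiv 1$ modulo $2(q+1)$, and no element is fixed by multiplication by $q$. So every coset is of the form $\{s,\,qs \bmod 2(q+1)\}$ for some odd $s\in O_{2,n}(1)$.

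**Plan.** The plan is to start from the single relation $q\equiv -1+2n$... no, more cleanly, from $q\equiv q \pmod{2(q+1)}$ and compute $q\cdot(1+2j) \bmod 2(q+1)$ directly. The key computation is to reduce $q(1+2j)=q+2qj$ modulo $2(q+1)=2q+2$. Writing $2qj=(2q+2)j-2j=2(q+1)j-2j$, I get $q(1+2j)\equiv q-2j \pmod{2(q+1)}$. Now the task is purely to express $q-2j$ in the canonical odd form $1+2(\,\cdot\,)$ and to reduce it into the residue range $\{0,1,\dots,2n-1\}$, splitting into the two ranges of $j$ that the lemma specifies.

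**The two cases.** For part (1), assume $0\le j\le \frac{q-3}{4}$. Then $q-2j$ is odd and positive, and I would write $q-2j=1+2\left(\frac{q-1}{2}-j\right)$, which is exactly the claimed second element; I must check it already lies in the range $[0,2n-1]$ (it does, since $q-2j\le q<2(q+1)$ and $q-2j\ge q-\frac{q-3}{2}=\frac{q+3}{2}>0$), so no further reduction is needed. For part (2), assume $\frac{q+1}{2}\le j\le \frac{3q-1}{4}$. Now $q-2j$ may be negative, so I add the modulus: $q-2j+2(q+1)=3q+2-2j$, and I would rewrite this as $1+2\left(n+\frac{q-1}{2}-j\right)$ using $n=q+1$, then verify this representative sits in $[0,2n-1]$ for the stated range of $j$. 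The bookkeeping is to confirm the listed $j$-ranges in (1) and (2) together enumerate every coset exactly once and that the two displayed elements of each coset are distinct.

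**Main obstacle.** The computation itself is routine modular arithmetic; the only genuine care is in \emph{partitioning the index set correctly and checking exhaustiveness}. There are $n=q+1$ odd residues in $O_{2,n}(1)$, hence $\frac{q+1}{2}$ cosets of size two. The ranges $0\le j\le\frac{q-3}{4}$ and $\frac{q+1}{2}\le j\le\frac{3q-1}{4}$ each contain $\frac{q+1}{4}$ integer values (here $4\mid q+1$ guarantees these endpoints are integers and the counts are exact), so together they furnish $\frac{q+1}{2}$ values of $j$, one per coset. The subtle point I would emphasize is showing these two families of representatives are disjoint and that each coset is hit exactly once—i.e.\ that the map sending $j$ in one of the two ranges to the coset $C_{1+2j}$ is a bijection onto the set of all cosets. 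This amounts to checking that the ``partner'' index $\frac{q-1}{2}-j$ (resp.\ $n+\frac{q-1}{2}-j$) produced in each case falls \emph{outside} the chosen range of $j$, so that no coset is counted twice; this is the step where the divisibility hypothesis $4\mid n$ does the real work, and it is where I would spend the most attention.
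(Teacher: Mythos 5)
Your proposal is correct and follows essentially the same route as the paper: the identical reduction $q(1+2j)\equiv q-2j \pmod{2(q+1)}$, the same two-case split according to whether $q-2j$ needs a further addition of the modulus, and an exhaustiveness check at the end. The only difference is cosmetic—the paper simply asserts that the listed cosets exhaust $O_{2,q+1}(1)$, whereas you spell out the counting ($\frac{q+1}{2}$ cosets versus $\frac{q+1}{2}$ listed representatives) and the verification that partner indices fall outside the two ranges, which is a more careful rendering of the same step.
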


\begin{proof}
\begin{enumerate}
\item If $0\leq j\leq \frac{{q-3}}{4},$ then $j<\frac{{q-1}}{2}.$ Since $2qj\equiv
-2j$ $(mod$ $2n),$ $q\left( {1+2j}\right) \equiv q-2j=1+2\left( {\frac{{q-1}%
}{2}-j}\right) $ $(mod$ $2n).$
\item If $\frac{{q+1}}{2}\leq j\leq \frac{{3q-1}}{4}%
,$ then $j>\frac{{q-1}}{2}$ and so $q\left( {1+2j}\right) \equiv 1+2\left( {%
n+\frac{{q-1}}{2}-j}\right) $ $(mod$ $2n).$
\end{enumerate}
The union of all $q$-cyclotomic cosets given here makes up the set ${O_{2,q+1}}\left( 1\right) $ and so the
proof is completed.
\end{proof}

\begin{lemma}
\label{lm4} For all $0 \le j \le \frac{{q - 3}}{4}$, $- {C_{1 + 2j}} = {C_{1
+ 2\left( {\frac{{q + 1}}{2} + j} \right)}}$.
\end{lemma}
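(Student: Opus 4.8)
The plan is to exploit that multiplication by $-1$ on $\mathbb{Z}_{2(q+1)}$ commutes with multiplication by $q$, so that negation permutes the $q$-cyclotomic cosets; concretely $-C_{1+2j}=C_{-(1+2j)}$ is again a single coset, and the whole lemma reduces to identifying its canonical representative as $1+2\left(\frac{q+1}{2}+j\right)$. I would carry this out by writing down both elements of $C_{1+2j}$ from Lemma \ref{lm3}(1), negating each one modulo $2n=2(q+1)$, and comparing the resulting two-element set with the description of $C_{1+2\left(\frac{q+1}{2}+j\right)}$ furnished by Lemma \ref{lm3}(2).

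First I would apply Lemma \ref{lm3}(1), valid since $0\le j\le \frac{q-3}{4}$, to get $C_{1+2j}=\left\{1+2j,\,1+2\left(\frac{q-1}{2}-j\right)\right\}$. Negating the second listed element gives $-\left(1+2\left(\frac{q-1}{2}-j\right)\right)=-(q-2j)\equiv q+2+2j\pmod{2(q+1)}$, and rewriting $q+2+2j=1+(q+1)+2j$ puts this in the form $1+2\left(\frac{q+1}{2}+j\right)$. This already exhibits the canonical representative named in the statement and fixes the index $j'=\frac{q+1}{2}+j$ of the target coset.

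Next I would verify that $j'=\frac{q+1}{2}+j$ lies in the range $\frac{q+1}{2}\le j'\le \frac{3q-1}{4}$ required to apply Lemma \ref{lm3}(2): the lower bound is clear from $j\ge 0$, and the upper bound follows from $j\le\frac{q-3}{4}$ since $\frac{q+1}{2}+\frac{q-3}{4}=\frac{3q-1}{4}$. Lemma \ref{lm3}(2) then gives the second element of $C_{1+2j'}$ as $1+2\left(n+\frac{q-1}{2}-j'\right)$, and substituting $n=q+1$ and $j'=\frac{q+1}{2}+j$ simplifies the index to $q-j$, i.e.\ the second element is $1+2(q-j)$. On the other hand, negating the first element of $C_{1+2j}$ gives $-(1+2j)\equiv 1+2(q-j)\pmod{2(q+1)}$, matching exactly. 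Hence $-C_{1+2j}=\left\{1+2\left(\frac{q+1}{2}+j\right),\,1+2(q-j)\right\}=C_{1+2\left(\frac{q+1}{2}+j\right)}$, as claimed.

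I expect no genuine obstruction here: every coset in this regime has exactly two elements (as established just before Lemma \ref{lm3}), so matching the two negated elements against the two elements supplied by Lemma \ref{lm3}(2) is conclusive. The only point demanding care is the range check that legitimizes the use of Lemma \ref{lm3}(2); all remaining manipulations are routine reductions modulo $2(q+1)$, and I would keep the representatives in the normalized form $1+2k$ throughout to make the comparison transparent.
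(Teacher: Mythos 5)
Your proof is correct and takes essentially the same route as the paper's: both negate modulo $2(q+1)$ and use Lemma \ref{lm3} to identify the result as the coset $C_{1+2\left(\frac{q+1}{2}+j\right)}$. If anything, yours is more complete --- the paper tracks only the single element $-(1+2j)\equiv 1+2(n-j-1)$ and matches it against the second entry of the coset from Lemma \ref{lm3}(2) (with a small typo, writing $n+\frac{q+1}{2}-k$ where $n+\frac{q-1}{2}-k$ is meant), whereas you negate both elements of $C_{1+2j}$ and explicitly verify the range condition $\frac{q+1}{2}\le \frac{q+1}{2}+j\le\frac{3q-1}{4}$ that legitimizes invoking Lemma \ref{lm3}(2).
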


\begin{proof}
It can be seen that $-\left( {1+2j}\right) \equiv 1+2\left( {n-j-1}\right) $ $(mod$ $%
2n).$ From Lemma \ref{lm3}$(1),$ it is enough to find a integer $k$ such that $n+\frac{{q+1}%
}{2}-k=n-j-1$. This is possible only when $k=\frac{{q+1}}{2}+j.$
\end{proof}

As a result of Lemma \ref{lm4}, one can see that ${C_{1+2j}}\neq -{C_{1+2j}}$
for all $0\leq j\leq n-1$. We establish the defining set $Z$ to be $Z=\left(
\bigcup\limits_{j=\lambda }^{\frac{{q-3}}{4}}{{C_{1+2j}}}\right) \cup \left(
\bigcup\limits_{j=\frac{{q+1}}{2}+\lambda }^{\frac{{3q-1}}{4}}{{C_{1+2j}}}%
\right) ,$ where $1\leq \lambda \leq \frac{{q-3}}{4}.$ Then, by Lemma \ref%
{lm3}, we have
\begin{equation*}
Z=\left\{
\begin{array}{c}
{1+2\lambda ,1+2\left( {\lambda +1}\right) ,\ldots ,1+2\left( {\frac{{q-1}}{2%
}-\lambda }\right) ,} \\
{1+2\left( {\frac{{q+1}}{2}+\lambda }\right) ,1+2\left( {\frac{{q+3}}{2}%
+\lambda }\right) ,\ldots ,1+2\left( {q-\lambda }\right) }%
\end{array}%
\right\} {.}
\end{equation*}

Clearly, $Z$ contains $\frac{{q+1}}{2}-2{\lambda }$ consecutive terms and $%
\left\vert Z\right\vert =q+1-4{\lambda }.$ These facts provide us to derive
a class of LCD negacyclic codes.

\begin{theorem}
\label{thr5} Assume that $q$ is an odd prime power and $n=q+1$ such that $%
4|n $. Then, a class of LCD negacyclic codes with parameters
\begin{equation*}
{\left[ {q + 1,4\lambda,d \ge \frac{{q + 3}}{2} - 2\lambda} \right]_q}
\end{equation*}

where $1\leq \lambda \leq \frac{{q-3}}{4},$ exists.
\end{theorem}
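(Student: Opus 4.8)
The plan is to take $\mathcal{C}$ to be the negacyclic code of length $n=q+1$ over $\mathbb{F}_q$ whose defining set is the set $Z$ constructed immediately before the statement, and then to verify the three assertions in turn: that $\mathcal{C}$ is LCD, that $\dim\mathcal{C}=4\lambda$, and that $d(\mathcal{C})\geq \frac{q+3}{2}-2\lambda$. Since the theorem asserts only LCD together with a lower bound on $d$ (and not the MDS property), no Singleton-equality argument is needed, and almost all the work is already packaged into the lemmas preceding the statement.

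First, for the LCD property I would invoke Corollary \ref{cor00} (equivalently Theorem \ref{thr00}), which reduces the LCD condition to showing $-Z=Z$. Write $Z=Z'\cup Z''$ with $Z'=\bigcup_{j=\lambda}^{(q-3)/4}C_{1+2j}$ and $Z''=\bigcup_{j=(q+1)/2+\lambda}^{(3q-1)/4}C_{1+2j}$. By Lemma \ref{lm4}, for each $j$ with $\lambda\leq j\leq \frac{q-3}{4}$ one has $-C_{1+2j}=C_{1+2(\frac{q+1}{2}+j)}$, and as $j$ ranges over this interval the index $\frac{q+1}{2}+j$ ranges over exactly $\frac{q+1}{2}+\lambda$ to $\frac{3q-1}{4}$; hence $-Z'=Z''$. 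Because negation is an involution on cyclotomic cosets, this also gives $-Z''=Z'$, so $-Z=-Z'\cup -Z''=Z''\cup Z'=Z$, which is precisely the LCD condition.

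Next, for the dimension and minimum distance I would use the two facts recorded just before the statement. Each $q$-cyclotomic coset modulo $2(q+1)$ has exactly two elements, the blocks $Z'$ and $Z''$ are disjoint (their index ranges satisfy $\frac{q-1}{2}-\lambda<\frac{q+1}{2}+\lambda$), and the cosets comprising $Z'$ are pairwise distinct; together these give $|Z|=q+1-4\lambda$. Since the dimension of a negacyclic code equals $n-|Z|$, I obtain $\dim\mathcal{C}=(q+1)-(q+1-4\lambda)=4\lambda$. For the distance, the expanded form of $Z$ shows that $Z$ contains the consecutive block $\{1+2j:\lambda\leq j\leq \frac{q-1}{2}-\lambda\}$ of $\frac{q+1}{2}-2\lambda$ terms, so applying the BCH bound for negacyclic codes (Theorem \ref{thr0}) with $d-1=\frac{q+1}{2}-2\lambda$ yields $d\geq \frac{q+3}{2}-2\lambda$.

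I do not expect a genuine obstacle here so much as careful bookkeeping: one must check that negation pairs the first block of cosets precisely onto the second with neither overlap nor leftover, and this hinges on $4\mid q+1$ so that the bounds $\frac{q-3}{4}$, $\frac{q+1}{2}+\lambda$, $\frac{3q-1}{4}$ are integers that align exactly as in Lemma \ref{lm4}. The one conceptual point worth stressing is that, unlike Theorems \ref{thr1} and \ref{thr2}, this family is \emph{not} claimed to be MDS; the Singleton defect $n+1-k-d$ is in general positive, so the BCH estimate is the sharpest bound the defining set provides and there is no MDS property available to pin the minimum distance down exactly.
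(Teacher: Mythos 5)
Your proposal is correct and follows essentially the same route as the paper: take the defining set $Z$ constructed before the statement, use Lemma \ref{lm4} to get $-Z=Z$ (hence LCD via Theorem \ref{thr00}), count $\left\vert Z\right\vert =q+1-4\lambda$ for the dimension, and apply the BCH bound (Theorem \ref{thr0}) to the block of $\frac{q+1}{2}-2\lambda$ consecutive terms for the distance bound. The paper's own proof is just a terse version of this, leaving the $-Z'=Z''$ pairing and the coset-disjointness bookkeeping implicit, so your write-up is in fact a more complete rendering of the intended argument.
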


\begin{proof}
Let $\mathcal{C}$ be a negacyclic code of length $q+1$ with defining set $Z$
over $\mathbb{F}_{q}.$ The parameters of $\mathcal{C}$ are followed from
that $Z$ contains $\frac{{q+1}}{2}-2\lambda $ consecutive terms and $%
\left\vert Z\right\vert =q+1-4\lambda $.
\end{proof}

\begin{example}
We list some parameters of negacyclic LCD codes acquired by Theorem \ref%
{thr5} in Table \ref{table3}.
\end{example}

\begin{table}
\caption{Some negacyclic LCD codes obtained from Theorem \protect\ref{thr5}}
\label{table3}\centering
\begin{tabular}{cccc}
\hline $q$ & $n$ & $\lambda $ & Negacyclic LCD Codes \\
\hline $19$ & $20$ & $1\leq \lambda \leq 4$ & ${\left[ {20,4\lambda ,\geq
11-2\lambda }\right] _{19}}$ \\
$23$ & $24$ & $1\leq \lambda \leq 5$ & ${\left[ {24,4\lambda ,\geq
13-2\lambda }\right] _{23}}$ \\
\hline
\end{tabular}%
\end{table}

\section{Negacyclic MDS Hermitian LCD Codes}

In this section, we study Hermitian LCD codes over finite fields $\mathbb{F}_{q^{2}}.$ We use negacyclic codes of length $n,$ where $n|q-1$ and $%
n=q^{2}+1$ to construct $q^{2}$-ary MDS Hermitian LCD codes and Hermitian
LCD codes. To accomplish this task, we need to determine the defining set $%
\overline{Z}$ of negacyclic codes and the number of consecutive terms
contained by $\overline{Z},$ where $\overline{Z}=-q\overline{Z}.$ At the
beginning, we determine exact structure of $q^{2}$-cyclotomic cosets modulo $%
2n.$

\subsection{Negacyclic MDS Hermitian LCD codes of length $\left.
n\right\vert \left( q-1\right) $}

In this subsection, we use negacyclic codes of length $n=\frac{q-1}{\gamma }$ to construct MDS Hermitian LCD codes, where $q$ is an
odd prime power. Since $\left. n\right\vert \left( q^{2}-1\right) $ and $%
\gcd \left( n,q+1\right) =1$ or $2$ we have that $q^{2}=1+\gamma n\left(
q+1\right) \equiv 1\left( mod\text{ }2n\right) $ or $q^{2}=1+2\gamma n\frac{%
\left( q+1\right) }{2}\equiv 1$ $\left( mod\text{ }2n\right) .$ This means
that each $q^{2}$-cyclotomic coset modulo $2n$ has only one element.

The following enables us to determine the number of elements and consecutive
terms contained by the defining set $\overline{Z}$ which we define later.

\begin{lemma}
\label{lm6}Let $n=\frac{q-1}{\gamma }.$

\begin{enumerate}
\item If $2\nmid \gamma ,$ then for all $j\leq \frac{n}{2}-1$ we have that $%
-qC_{1+2j}=C_{1+2\left( \frac{n}{2}-1-j\right) }$ and for all $j>\frac{n}{2}$
we have that $-qC_{1+2j}=C_{1+2\left( \frac{3n}{2}-1-j\right) }.$

\item If $2\left\vert \gamma \right. ,$ then for all $j\leq n-1$ we have
that $-qC_{1+2j}=C_{1+2\left( n-1-j\right) }.$
\end{enumerate}
\end{lemma}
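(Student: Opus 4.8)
The plan is to reduce everything to a single modular computation. Because $n=\frac{q-1}{\gamma}$ gives $q=\gamma n+1\equiv 1\pmod{2n}$, each $q^{2}$-cyclotomic coset modulo $2n$ is a singleton (as noted just before the statement), so $-qC_{1+2j}$ is simply the coset $C_{-q(1+2j)\,(\mathrm{mod}\,2n)}$. Hence the whole lemma amounts to identifying the residue $-q(1+2j)\pmod{2n}$ and rewriting it in the canonical form $1+2k$ with $0\le k\le n-1$ that names the representative in $O_{2,n}(1)$.

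First I would substitute $q=\gamma n+1$ to obtain $-q(1+2j)=-(1+2j)-\gamma n(1+2j)$, so the computation hinges on reducing $\gamma n(1+2j)\pmod{2n}$. Since $2nj\equiv 0$, this is congruent to $\gamma n\pmod{2n}$, and the parity of $\gamma$ controls it: if $2\mid\gamma$ then $\gamma n\equiv 0$, whereas if $2\nmid\gamma$ then $\gamma n\equiv n\pmod{2n}$. This dichotomy is exactly the split between parts (2) and (1) of the lemma.

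In the even case $2\mid\gamma$ one gets $-q(1+2j)\equiv -(1+2j)\equiv 2n-1-2j=1+2(n-1-j)\pmod{2n}$, and since $0\le j\le n-1$ the index $n-1-j$ already lies in $\{0,\dots,n-1\}$; this yields part (2) directly. In the odd case $2\nmid\gamma$ one gets $-q(1+2j)\equiv n-1-2j\pmod{2n}$; here $n$ is automatically even because $q-1$ is even and $\gamma$ is odd, so $n-1-2j$ is odd and of the required form $1+2k$ with $k\equiv\frac{n}{2}-1-j\pmod{n}$.

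The only genuinely delicate step is the range bookkeeping in the odd case: choosing the representative $k\in\{0,\dots,n-1\}$ forces a case distinction on the sign of $\frac{n}{2}-1-j$. For $j\le\frac{n}{2}-1$ the value $\frac{n}{2}-1-j$ is already nonnegative and at most $n-1$, giving $C_{1+2(\frac{n}{2}-1-j)}$; for $j>\frac{n}{2}$ it is negative, so adding $n$ produces the representative $\frac{3n}{2}-1-j$, which one checks lies in $[\frac{n}{2},n-2]\subseteq\{0,\dots,n-1\}$, yielding $C_{1+2(\frac{3n}{2}-1-j)}$. This exhausts the stated cases; the boundary index $j=\frac{n}{2}$ is simply omitted from the statement.
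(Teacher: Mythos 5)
Your main computation is correct and follows essentially the same route as the paper's proof: both arguments reduce the lemma to evaluating $-q(1+2j) \bmod 2n$, split on the parity of $\gamma$ (the paper phrases the dichotomy as $-q\equiv n-1$ resp.\ $-q\equiv -1 \pmod{2n}$; you phrase it as $\gamma n\equiv n$ resp.\ $\gamma n\equiv 0 \pmod{2n}$, which is the same thing), and handle the wrap-around case $j>\frac{n}{2}$ by shifting the representative by $2n$. Your range bookkeeping and the observation that $n$ is even when $2\nmid\gamma$ are both correct, and are in fact slightly more careful than the paper's.

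However, your opening claim that $q=\gamma n+1\equiv 1\pmod{2n}$ is false whenever $\gamma$ is odd: in that case $\gamma n\equiv n\pmod{2n}$, so $q\equiv n+1\pmod{2n}$ --- which is precisely the fact your case-(1) computation exploits, so your write-up contradicts itself on this point. (Indeed, if $q\equiv 1\pmod{2n}$ held, then $-q\equiv -1\pmod{2n}$ and part (1) of the lemma would collapse to the formula in part (2).) The correct justification for the $q^{2}$-cyclotomic cosets being singletons, and the one the paper actually gives just before the lemma, is $q^{2}\equiv 1\pmod{2n}$, which follows from $q^{2}-1=\gamma n(q+1)$ together with $q+1$ being even; it does not require $q\equiv 1\pmod{2n}$. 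Since the singleton property is all you use from that sentence, and your main computation works from the exact identity $q=\gamma n+1$ rather than from the false congruence, the slip is local and easily repaired, but it must be corrected: replace the claim by $q^{2}\equiv 1\pmod{2n}$.
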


\begin{proof}
\begin{enumerate}
\item Observe that $-q\equiv n-1$ $\left( mod\text{ }2n\right) .$ Then we
have $-q\left( 1+2j\right) =-q-2qj\equiv n-1+2\left( n-1\right) j\equiv
1+n-2-2j=1+2\left( \frac{n}{2}-1-j\right) $ $\left( mod\text{ }2n\right) .$
If $j>\frac{n}{2}$, then $\frac{n}{2}-1-j<0$ and so we have that $1+2\left(
\frac{n}{2}-1-j\right) \equiv 1+2\left( \frac{n}{2}-1-j\right) +2n=1+2\left(
\frac{3n}{2}-1-j\right) $ $\left( mod\text{ }2n\right) .$

\item In this case $-q\equiv -1$ $\left( mod\text{ }2n\right) .$ So, we have
$-q\left( 1+2j\right) \equiv -1-2j\equiv 1-2-2j+2n=1+2\left( n-1-j\right) $ $%
\left( mod\text{ }2n\right) .$
\end{enumerate}
\end{proof}

Let $n=\frac{q-1}{\gamma }>4$ and $q\equiv 1$ $\left( mod\text{ }4\right) .$
Then, we give the defining set $\overline{Z}$ as below:

\begin{equation*}
\begin{array}{c}
\text{If }2\nmid \gamma ,\text{ then }\overline{Z}=\bigcup\limits_{j=\frac{%
q-4\gamma -1}{4\gamma }-l}^{\frac{{q-1}}{4\gamma }+l}{{C_{1+2j}}},\text{
where }0\leq l\leq \frac{q-4\gamma -1}{4\gamma }. \\
\text{If }2\left\vert \gamma \right. \text{ and }n\text{ is even, then }%
\overline{Z}=\bigcup\limits_{j=\frac{q-2\gamma -1}{2\gamma }-l}^{\frac{{q-1}%
}{2\gamma }+l}{{C_{1+2j}}},\text{ where }0\leq l\leq \frac{q-4\gamma -1}{%
2\gamma }. \\
\text{If }2\left\vert \gamma \right. \text{ and }n\text{ is odd, then }%
\overline{Z}=\bigcup\limits_{j=\frac{q-\gamma -1}{2\gamma }-l}^{\frac{{%
q-\gamma -1}}{2\gamma }+l}{{C_{1+2j}}},\text{ where }0\leq l\leq \frac{%
q-3\gamma -1}{2\gamma }.%
\end{array}%
\end{equation*}

From the definitions of the defining sets $\overline{Z},$ we give the number
of elements of the defining sets $\overline{Z}$ and we show that all of the
elements are consecutive.

\begin{equation*}
\begin{array}{c}
\text{If }2\nmid \gamma ,\text{ then }\left\vert \overline{Z}\right\vert
=2l+2,\text{ where }0\leq l\leq \frac{q-4\gamma -1}{4\gamma }. \\
\text{If }2\left\vert \gamma \right. \text{ and }n\text{ is even, then }%
\left\vert \overline{Z}\right\vert =2l+2,\text{ where }0\leq l\leq \frac{%
q-4\gamma -1}{2\gamma}. \\
\text{If }2\left\vert \gamma \right. \text{ and }n\text{ is odd, then }%
\left\vert \overline{Z}\right\vert =2l+1,\text{ where }0\leq l\leq \frac{%
q-3\gamma -1}{2\gamma }.%
\end{array}%
\end{equation*}%
So, the following is immediate.

\begin{theorem}
\label{thr6}Let $q\equiv 1$ $\left( mod\text{ }4\right) ,$ and $n=\frac{{q-1}%
}{\gamma }>2.$

\begin{enumerate}
\item If $2\nmid \gamma ,$ then there exists a $q^{2}-$ary ${\left[ n{%
,n-2l-2,2l+3}\right] }$ negacyclic MDS Hermitian LCD code, where $0\leq
l\leq \frac{q-4\gamma -1}{4\gamma }.$

\item If $2\left\vert \gamma \right. $ and $n$ is even, then there exists a $%
q^{2}-$ary ${\left[ n{,n-2l-2,2l+3}\right] }$ negacyclic MDS Hermitian LCD
code, where $0\leq l\leq \frac{q-4\gamma -1}{2\gamma }.$

\item If $2\left\vert \gamma \right. $ and $n$ is odd, then there exists a $%
q^{2}-$ary ${\left[ n{,n-2l-1,2l+2}\right] }$ negacyclic MDS Hermitian LCD
code, $0\leq l\leq \frac{q-3\gamma -1}{2\gamma }.$
\end{enumerate}
\end{theorem}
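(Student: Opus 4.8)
The plan is to handle the three cases in parallel, since each amounts to the same four-part bookkeeping applied to the defining set $\overline{Z}$ assembled just above the statement. In every case I would let $\mathcal{C}$ be the negacyclic code over $\mathbb{F}_{q^{2}}$ with defining set $\overline{Z}$, and then extract its parameters $[n,k,d]$ together with the Hermitian LCD property from the ingredients already in place: the fact that every $q^{2}$-cyclotomic coset modulo $2n$ is a singleton, the count $|\overline{Z}|$, the consecutiveness of $\overline{Z}$, and Lemma \ref{lm6}.

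First I would read off the dimension. Because $q^{2}\equiv 1\ (mod\ 2n)$ makes each $q^{2}$-cyclotomic coset a singleton, the dimension is simply $k=n-|\overline{Z}|$; inserting the counts recorded before the theorem gives $k=n-2l-2$ in cases (1) and (2) and $k=n-2l-1$ in case (3). Next I would pin the minimum distance down by sandwiching it. Since $\overline{Z}$ is a block of $|\overline{Z}|$ consecutive terms $1+2j$, the BCH bound (Theorem \ref{thr0}) yields $d\ge |\overline{Z}|+1$, namely $d\ge 2l+3$ in cases (1),(2) and $d\ge 2l+2$ in case (3). The Singleton bound gives the reverse inequality $d\le n-k+1=|\overline{Z}|+1$, so the two meet, forcing $d=|\overline{Z}|+1$ and $n+1=k+d$. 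Thus the MDS assertion is not a separate hypothesis to be checked but exactly the coincidence of the BCH lower bound with the Singleton upper bound.

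The one genuine verification — and the step I expect to be the crux — is the Hermitian LCD property, which by Theorem \ref{thr04} (equivalently Corollary \ref{cor0}) holds precisely when $-q\overline{Z}=\overline{Z}$. Here I would apply Lemma \ref{lm6} to the index interval. In case (1), rewriting the range of $j$ as $[\tfrac{n}{4}-1-l,\,\tfrac{n}{4}+l]$ (using $\tfrac{q-1}{4\gamma}=\tfrac{n}{4}$ and $\tfrac{q-4\gamma-1}{4\gamma}=\tfrac{n}{4}-1$), I would check that the involution $j\mapsto \tfrac{n}{2}-1-j$ of Lemma \ref{lm6}(1) interchanges the two endpoints and hence carries the block onto itself; the admissible bound $l\le \tfrac{q-4\gamma-1}{4\gamma}=\tfrac{n}{4}-1$ is exactly what keeps every index in the regime $j\le \tfrac{n}{2}-1$ where that branch applies, so no crossing into the second branch occurs. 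Cases (2) and (3) run the same way with the involution $j\mapsto n-1-j$ of Lemma \ref{lm6}(2): in case (2) the block $[\tfrac{n}{2}-1-l,\,\tfrac{n}{2}+l]$ has its endpoints swapped, while in case (3) the block $[\tfrac{n-1}{2}-l,\,\tfrac{n-1}{2}+l]$ is symmetric about its fixed centre $\tfrac{n-1}{2}$, which exists precisely because $n$ is odd.

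The delicate point, and the thing I would take care to write out, is simply that in each case the stated range of $l$ confines the whole consecutive block to the single branch of Lemma \ref{lm6} in which the relevant involution is valid, so that $-q\overline{Z}=\overline{Z}$ holds without any wrap-around. Once that symmetry is confirmed, Corollary \ref{cor0} delivers the Hermitian LCD property, and together with the dimension and the squeezed minimum distance this establishes all three families as negacyclic MDS Hermitian LCD codes with the claimed parameters.
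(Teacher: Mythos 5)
Your proposal is correct and follows essentially the same route as the paper: the paper assembles exactly the same ingredients (singleton $q^{2}$-cyclotomic cosets, the defining sets $\overline{Z}$ with their cardinalities $2l+2$ or $2l+1$ and consecutiveness, and Lemma \ref{lm6}) and then declares the theorem ``immediate.'' Your write-up simply makes explicit what the paper leaves implicit — the dimension count, the BCH/Singleton squeeze for the MDS property, and the endpoint checks showing each block stays in the branch of Lemma \ref{lm6} where the involution $j\mapsto \tfrac{n}{2}-1-j$ (resp.\ $j\mapsto n-1-j$) maps $\overline{Z}$ onto itself — and these verifications are all accurate.
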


\begin{example}
Let $q=29$ and $n=\frac{{28}}{\lambda }.$ Then, by applying Theorem \ref%
{thr6}, we obtain $17$ $q^{2}$-ary negacyclic MDS Hermitian LCD codes with
parameters $\left[ 28,26,3\right] ,$ $\left[ 28,24,5\right] ,$ $\left[
28,22,7\right] ,$ $\left[ 28,20,9\right] ,$ $\left[ 28,18,11\right] ,$ $%
\left[ 28,16,13\right] ,$ $\left[ 28,14,15\right] ,$ $\left[ 14,12,3\right]
, $ $\left[ 14,10,5\right] ,$ $\left[ 14,8,7\right] ,$ $\left[ 14,6,9\right]
, $ $\left[ 14,4,11\right] ,$ $\left[ 14,2,13\right] ,$ $\left[ 7,6,2\right]
,$ $\left[ 7,4,4\right] ,$ $\left[ 7,2,6\right] ,$ $\left[ 4,3,2\right] .$
\end{example}

\begin{table}
\caption{Some Hermitian MDS negacyclic LCD codes obtained from Theorem
\protect\ref{thr6}}
\label{table4}\centering
\begin{tabular}{ccccc}
\hline $q$ & $n$ & $\gamma $ & $l$ & Negacyclic MDS Hermitian LCD Codes \\
\hline $5$ & $4$ & $1$ & $l=0$ & ${\left[ {5,4,}2\right] _{5^{2}}}$ \\
$13$ & $12$ & $1$ & $0\leq l\leq 2$ & ${\left[ {12,12-2l-2,}2l+3\right]
_{13^{2}}}$ \\
$13$ & $6$ & $2$ & $0\leq l\leq 1$ & ${\left[ {6,6-2l-2,}2l+3\right]
_{13^{2}}}$ \\
$17$ & $16$ & $1$ & $0\leq l\leq 3$ & ${\left[ {16,16-2l-2,}2l+3\right]
_{17^{2}}}$ \\
$17$ & $8$ & $2$ & $0\leq l\leq 2$ & ${\left[ {8,8-2l-2,}2l+3\right]
_{17^{2}}}$ \\
$17$ & $4$ & $4$ & $l=0$ & ${\left[ {4,2,}3\right] _{17^{2}}}$ \\
\hline
\end{tabular}%
\end{table}

By expanding the defining set $\overline{Z},$ we can obtain non MDS
negacyclic Hermitian LCD codes over $\mathbb{F}_{q^{2}}.$ Let $n=\frac{{q-1}%
}{\gamma }>4,$ $2\nmid \gamma $ and $q\equiv 1$ $\left( mod\text{ }4\right)
. $ Then adjust the defining set $\overline{Z}$ as below.%
\begin{equation*}
\overline{Z}=\left( \bigcup\limits_{j=0}^{\frac{{q-2\gamma -1}}{2\gamma }}{{%
C_{1+2j}}}\right) \cup \left( \bigcup\limits_{j=\frac{q-1}{2\gamma }+l}^{%
\frac{{q-\gamma -1}}{\gamma }-l}{{C_{1+2j}}}\right) ,\text{ where }0\leq
l\leq \frac{q-8\gamma -1}{4\gamma }.
\end{equation*}

\begin{theorem}
\label{thr7}Let $n=\frac{{q-1}}{\gamma }>4,$ $2\nmid \gamma $ and $q\equiv 1$
$\left( mod\text{ }4\right) .$ Then for each $0\leq l\leq \frac{q-8\gamma -1%
}{4\gamma }$ there exists a $q^{2}-$ary ${\left[ n{,n-}\left( {\frac{{q-1}}{2%
}+2l+2}\right) {,d\geq \frac{{q-1}}{2{\gamma }}+l+2}\right] }$ negacyclic
Hermitian LCD code.
\end{theorem}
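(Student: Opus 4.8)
The plan is to mirror the proof of Theorem~\ref{thr6}, running through the three standard checkpoints for a negacyclic code $\mathcal{C}$ with defining set $\overline{Z}$: the Hermitian LCD property, the dimension, and a BCH-type lower bound on the distance. Before anything else I would record the structural facts forced by the hypotheses $q\equiv 1\ (\mathrm{mod}\ 4)$ and $2\nmid\gamma$: since $4\mid q-1$ and $\gamma$ is odd, $4\mid n$, so $n$ is even and $\tfrac{n}{2}=\tfrac{q-1}{2\gamma}$ is even; this is exactly the regime in which the two-branch description of $-q\cdot C_{1+2j}$ in Lemma~\ref{lm6}(1) is valid.

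For the LCD property I would invoke Corollary~\ref{cor0}, which reduces the claim to the single set identity $-q\,\overline{Z}=\overline{Z}$. Writing $\overline{Z}$ as the union of its constituent blocks of indices $j$, I would apply the involution of Lemma~\ref{lm6}(1): on the lower range $0\le j\le \tfrac{n}{2}-1$ it acts by $j\mapsto \tfrac{n}{2}-1-j$ and preserves that range setwise, while on the upper range it acts by $j\mapsto \tfrac{3n}{2}-1-j$ and sends the upper block of $\overline{Z}$ to itself. The one index $j=\tfrac n2$ straddling the two branches is not covered by either case of the lemma, so I would settle it by the direct computation $-q\equiv n-1\ (\mathrm{mod}\ 2n)$, giving $-q(1+n)\equiv 2n-1\ (\mathrm{mod}\ 2n)$. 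Once each block is shown to map into $\overline{Z}$, closure follows and $\mathcal{C}$ is Hermitian LCD.

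For the dimension I would use that each $q^{2}$-cyclotomic coset modulo $2n$ is a singleton, so $|\overline{Z}|$ is just the number of indices listed; the dimension is then $n-|\overline{Z}|$. The quantitative input here is that the blocks constituting $\overline{Z}$ are pairwise disjoint, which is precisely what the hypothesis $0\le l\le \tfrac{q-8\gamma-1}{4\gamma}$ guarantees (it forces the top of the long block to stay strictly below the start of the short block), and summing their lengths is intended to produce $|\overline{Z}|=\tfrac{q-1}{2}+2l+2$, hence the stated dimension.

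The genuinely hard step, and the one I would scrutinize most, is the distance. By Theorem~\ref{thr0} I only need to exhibit a set of consecutive exponents $\{1+2j:\ \ell\le j\le \ell+d-2\}$ inside $\overline{Z}$ of length $\tfrac{q-1}{2\gamma}+l+1$. The obstacle is that enforcing $-q\,\overline{Z}=\overline{Z}$ pins down the mirror block, and one must verify that this mirror neither shortens the long consecutive run nor fuses with it (cyclically, modulo $2n$) into a still longer run; the former would weaken the bound below $\tfrac{q-1}{2\gamma}+l+2$, while the latter would make the run fill all of $\overline{Z}$ and force the code to be MDS. Reconciling a run of length $\tfrac{q-1}{2\gamma}+l+1$ with a total size of $\tfrac{q-1}{2}+2l+2$ and with exact $-q$-closure is the crux of the argument, and it is here that the endpoint bookkeeping---especially near the boundary index $j=\tfrac n2$ and at the cyclic wrap from $2n-1$ back to $1$---must be carried out with care.
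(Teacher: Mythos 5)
You assemble the right machinery, and it is exactly the machinery the paper's construction rests on: Corollary~\ref{cor0} together with Lemma~\ref{lm6}(1) for the LCD property (your boundary computation, using $-q\equiv n-1\pmod{2n}$ and $2n\mid n^{2}$ to get $-qC_{1+n}=C_{2n-1}$, is correct and does fill the one case the lemma omits), singleton $q^{2}$-cyclotomic cosets for the dimension, and Theorem~\ref{thr0} for the distance. The genuine gap is that you never exhibit a defining set. The paper gives Theorem~\ref{thr7} no separate proof; its proof \emph{is} the display immediately preceding the statement,
\begin{equation*}
\overline{Z}=\left( \bigcup\limits_{j=0}^{\frac{q-2\gamma -1}{2\gamma }}{C_{1+2j}}\right) \cup \left( \bigcup\limits_{j=\frac{q-1}{2\gamma }+l}^{\frac{q-\gamma -1}{\gamma }-l}{C_{1+2j}}\right) ,
\end{equation*}
i.e.\ the full lower half-range $0\leq j\leq \frac{n}{2}-1$ together with a centered sub-interval of the upper half-range, from which the LCD property, $|\overline{Z}|$ and the consecutive run are supposed to be read off. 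Because you commit to no set, your dimension step is, in your own words, only ``intended to produce'' $|\overline{Z}|=\frac{q-1}{2}+2l+2$; your account of the hypothesis on $l$ (keeping a ``long block'' strictly below a ``short block'') corresponds to no construction in the paper, where the two blocks lie in different half-ranges and are disjoint automatically; and the distance bound, which you yourself identify as the crux, is deferred rather than proved. A plan that names its crux and postpones it is not a proof.

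Moreover, the bookkeeping you postpone is not routine, and your closing suspicion is well founded. The natural way to realize the claimed run, taking the indices $\left\{0,\ldots ,\frac{n}{2}+l\right\}$ together with their $-q$-image $\left\{n-1-l,\ldots ,n-1\right\}$, yields a set of size $\frac{n}{2}+2l+2$; but since $2n-1$ and $1$ are cyclically consecutive elements of $O_{2,n}\left(1\right)$, the two pieces fuse into a single cyclic run, and Theorem~\ref{thr0} combined with the Singleton bound forces the code to be MDS --- precisely the second horn of your dichotomy, so the bound $d\geq \frac{q-1}{2\gamma }+l+2$ is met only as a strict underestimate and the theorem's role as a non-MDS construction collapses. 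The paper's own displayed set fares no better: it has $|\overline{Z}|=\frac{n}{2}+\left(\frac{n}{2}-2l\right)=n-2l$ and longest run $\frac{n}{2}$, i.e.\ parameters $\left[n,2l,d\geq \frac{q-1}{2\gamma }+1\right]$, matching neither the stated dimension nor the stated distance; and for odd $\gamma \geq 3$ the stated dimension $n-\left(\frac{q-1}{2}+2l+2\right)$ is negative, so the statement is vacuous there. In short, the ``careful endpoint bookkeeping'' you appeal to cannot be carried out for the statement and construction as they stand; a complete argument must first exhibit (and, in effect, repair) the defining set, which is exactly the content your proposal omits.
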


Let $n=\frac{q-1}{\gamma }>2$ and $q\equiv 3$ $\left( mod\text{ }4\right) .$
Then, we can give the defining set $\overline{Z}$ as the following.

\begin{equation*}
\begin{array}{c}
\text{If }2\nmid \gamma ,\text{ then }\overline{Z}=\bigcup\limits_{j=\frac{%
q-2\gamma -1}{4\gamma }-l}^{\frac{q-2\gamma -1}{4\gamma }+l}{{C_{1+2j}}},%
\text{ where }0\leq l\leq \frac{q-2\gamma -1}{4\gamma }. \\
\text{If }2\left\vert \gamma \right. ,\text{ then }\overline{Z}%
=\bigcup\limits_{j=\frac{q-\gamma -1}{2\gamma }-l}^{\frac{q-\gamma -1}{%
2\gamma }+l}{{C_{1+2j}}},\text{ where }0\leq l\leq \frac{q-3\gamma -1}{%
2\gamma }.%
\end{array}%
\end{equation*}

By the definition of $\overline{Z},$ the cardinality of $\overline{Z},$ $%
\left\vert \overline{Z}\right\vert$ is as the following and all its elements
are consecutive.%
\begin{equation*}
\begin{array}{c}
\text{If }2\nmid \gamma ,\text{ then }\left\vert \overline{Z}\right\vert
=2l+1,\text{ where }0\leq l\leq \frac{q-2\gamma -1}{4\gamma }. \\
\text{If }2|\gamma ,\text{ then }\left\vert \overline{Z}\right\vert =2l+1,%
\text{ where }0\leq l\leq \frac{q-3\gamma -1}{2\gamma }.%
\end{array}%
\end{equation*}%
Thus, the following is immediate.

\begin{theorem}
\label{thr6a}Let $q\equiv 3$ $\left( mod\text{ }4\right),$ and $n=\frac{{q-1}%
}{\gamma }>2.$

\begin{enumerate}
\item If $2\nmid \gamma ,$ then there exists a $q^{2}-$ary ${\left[ n
,n-2l-1,2l+2\right] }$ negacyclic MDS Hermitian LCD code, where $0\leq l\leq
\frac{q-2\gamma -1}{4\gamma }.$

\item If $2\left\vert \gamma \right. ,$ then there exists a $q^{2}-$ary ${%
\left[ n,n-2l-1,2l+2\right] }$ negacyclic MDS Hermitian LCD code, where $%
0\leq l\leq \frac{q-3\gamma -1}{2\gamma }.$
\end{enumerate}
\end{theorem}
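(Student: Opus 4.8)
The plan is to mirror the argument already used for Theorem~\ref{thr6}, assembling the three ingredients prepared just above the statement: the cardinality count of $\overline{Z}$, the BCH bound, and the Hermitian self-reciprocity criterion. First I would fix $q\equiv 3\ (mod\ 4)$ and $n=\frac{q-1}{\gamma}>2$, and let $\mathcal{C}$ be the negacyclic code of length $n$ over $\mathbb{F}_{q^{2}}$ whose defining set is the displayed $\overline{Z}$ (the first form when $2\nmid\gamma$, the second when $2\mid\gamma$). As already recorded, in either case $\overline{Z}$ consists of $2l+1$ consecutive elements of $O_{2,n}(1)$, so $\dim\mathcal{C}=n-\left\vert\overline{Z}\right\vert=n-2l-1$; and because these $2l+1$ exponents are consecutive, Theorem~\ref{thr0} gives $d(\mathcal{C})\ge(2l+1)+1=2l+2$.

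Next I would verify that $\mathcal{C}$ is Hermitian LCD, which by Corollary~\ref{cor0} (equivalently Theorem~\ref{thr04}) reduces to checking $-q\,\overline{Z}=\overline{Z}$. This is where Lemma~\ref{lm6} does the work. When $2\nmid\gamma$ the lemma gives $-q\,C_{1+2j}=C_{1+2\left(\frac{n}{2}-1-j\right)}$, and the involution $j\mapsto\frac{n}{2}-1-j$ has fixed point $j=\frac{n-2}{4}=\frac{q-2\gamma-1}{4\gamma}$, which is exactly the center of the index range of $\overline{Z}$. When $2\mid\gamma$ the lemma gives $-q\,C_{1+2j}=C_{1+2(n-1-j)}$, whose involution $j\mapsto n-1-j$ has fixed point $j=\frac{n-1}{2}=\frac{q-\gamma-1}{2\gamma}$, again the center of the range. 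Since in both cases $\overline{Z}$ is the block of cyclotomic cosets symmetric about that fixed point, it is stable under multiplication by $-q$, so $\overline{Z}=-q\,\overline{Z}$ and $\mathcal{C}$ is Hermitian LCD.

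Finally, the MDS property and the exact minimum distance follow from the Singleton bound: with $k=n-2l-1$ and the lower bound $d\ge 2l+2$, one has $k+d\ge(n-2l-1)+(2l+2)=n+1$, while Singleton forces $k+d\le n+1$. Hence equality holds, $d=2l+2$, and $\mathcal{C}$ is an MDS negacyclic Hermitian LCD code with parameters $\left[n,n-2l-1,2l+2\right]_{q^{2}}$ in both subcases, for the stated ranges of $l$.

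I expect no genuine obstacle here—the paper itself flags the conclusion as \emph{immediate}—so the only step needing care is the bookkeeping in the self-reciprocity argument: one must confirm that the chosen center of $\overline{Z}$ coincides with the fixed point of the $-q$-involution of Lemma~\ref{lm6}, and that the symmetric block closes up under this involution without any index leaving the range $j\le\frac{n}{2}-1$ (resp. $j\le n-1$) on which the relevant branch of Lemma~\ref{lm6} is valid; verifying that the endpoints $l$ stay within the prescribed bounds is what guarantees this.
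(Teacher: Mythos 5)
Your proposal is correct and follows essentially the same route as the paper: the paper declares the theorem ``immediate'' from the preceding setup (the defining sets $\overline{Z}$ with $2l+1$ consecutive elements, the singleton $q^{2}$-cyclotomic cosets, Lemma~\ref{lm6}, the BCH bound of Theorem~\ref{thr0}, and the Hermitian LCD criterion of Theorem~\ref{thr04}), and your write-up simply fills in those verifications. Your explicit checks—that the center of $\overline{Z}$ is the fixed point of the $-q$-involution, that the index range stays inside the valid branch of Lemma~\ref{lm6}, and the Singleton argument pinning $d=2l+2$—are exactly the details the paper leaves implicit.
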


\begin{table}
\caption{Some Hermitian MDS negacyclic LCD codes obtained from Theorem
\protect\ref{thr6a}}
\label{table5}\centering
\begin{tabular}{ccccc}
\hline $q$ & $n$ & $\gamma $ & $l$ & Negacyclic MDS Hermitian LCD Codes \\
\hline $7$ & $6$ & $1$ & $0\leq l\leq 1$ & ${\left[ {6,6-2l-1,}2l+2\right]
_{7^{2}}} $ \\
$11$ & $10$ & $1$ & $0\leq l\leq 2$ & ${\left[ {10,10-2l-1,}2l+2\right]
_{11^{2}}}$ \\
$11$ & $5$ & $2$ & $0\leq l\leq 1$ & ${\left[ {5,5-2l-1,}2l+2\right]
_{11^{2}}}$ \\
$19$ & $18$ & $1$ & $0\leq l\leq 4$ & ${\left[ {18,18-2l-1,}2l+2\right]
_{19^{2}}}$ \\
$19$ & $9$ & $2$ & $0\leq l\leq 3$ & ${\left[ {9,9-2l-1,}2l+2\right]
_{19^{2}}}$ \\
$19$ & $6$ & $3$ & $0\leq l\leq 1$ & ${\left[ {6,6-2l-1,}2l+2\right]
_{19^{2}}}$ \\
\hline
\end{tabular}%
\end{table}

Let $n=\frac{{q-1}}{\gamma }>4,$ $2\nmid \gamma $ and $q\equiv 3$ $\left( mod%
\text{ }4\right) .$ Then we establish the defining set $\overline{Z}$ as
follows.%
\begin{equation*}
\overline{Z}=\left( \bigcup\limits_{j=0}^{\frac{{q-2\gamma -1}}{2}}{{C_{1+2j}%
}}\right) \cup \left( \bigcup\limits_{j=\frac{q-1}{2\gamma }+l}^{\frac{{%
q-\gamma -1}}{\gamma }-l}{{C_{1+2j}}}\right) ,\text{ where }0\leq l\leq
\frac{q-6\gamma -1}{4\gamma }.
\end{equation*}

Now, we can construct non MDS negacyclic Hermitian LCD codes over $\mathbb{F}%
_{q^{2}}.$

\begin{theorem}
\label{thr7a}Let $n=\frac{{q-1}}{\gamma }>4,$ $2\nmid \gamma $ and $q\equiv 3
$ $\left( mod\text{ }4\right) .$ Then for each $0\leq l\leq \frac{q-6\gamma
-1}{4\gamma }$ there exists a $q^{2}-$ary ${\left[ n{,n-}\left( {\frac{{q-1}%
}{2}+2l+2}\right) {,d\geq \frac{{q-1}}{2}+l+2}\right] }$ negacyclic
Hermitian LCD code.
\end{theorem}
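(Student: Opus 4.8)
The plan is to build $\mathcal{C}$ as a negacyclic code over $\mathbb{F}_{q^2}$ whose defining set $\overline{Z}$ is chosen to do two jobs at once: to be invariant under multiplication by $-q$ modulo $2n$, so that $\mathcal{C}$ is Hermitian LCD by Corollary \ref{cor0} (equivalently, $g(x)$ is Hermitian self-reciprocal by Theorem \ref{thr04}); and to contain a long interval of consecutive elements of $O_{2,n}(1)$, so that Theorem \ref{thr0} forces up the minimum distance. Since $n\mid(q-1)$, every $q^2$-cyclotomic coset modulo $2n$ is a singleton, so $\overline{Z}$ is just a set of odd residues and the map $s\mapsto -qs$ acts as the involution recorded in Lemma \ref{lm6}(1). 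Note also that, as $q\equiv 3\ (\mathrm{mod}\ 4)$ and $\gamma$ is odd, $n$ is even, so $\tfrac{n}{2}$ is an integer.

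Concretely I would take $\overline{Z}=R\cup(-qR)$, where
\[
R=\Bigl\{\,1+2j \;:\; 0\le j\le \tfrac{q-1}{2\gamma}+l\,\Bigr\}
\]
is the initial consecutive block, that is, the whole lower half $0\le j\le \tfrac{n}{2}-1$ together with the $l+1$ residues $\tfrac{n}{2}\le j\le \tfrac{n}{2}+l$. By Lemma \ref{lm6}(1) the lower-half part of $R$ is carried back into the lower half (through $j\mapsto \tfrac{n}{2}-1-j$) and so produces nothing outside $R$, while each residue with $j=\tfrac{n}{2}+i$, $i=0,\dots,l$, is sent (through $j\mapsto \tfrac{3n}{2}-1-j$, with the midpoint $j=\tfrac{n}{2}$ checked directly) to the top residue with index $n-1-i$. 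Hence $-qR\subseteq \overline{Z}$ and, $-q$ being an involution, $\overline{Z}=-q\,\overline{Z}$; Corollary \ref{cor0} then guarantees that $\mathcal{C}$ is Hermitian LCD.

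For the parameters I count and estimate separately. The block $R$ is an interval of $\tfrac{q-1}{2\gamma}+l+1$ consecutive elements of $O_{2,n}(1)$, so Theorem \ref{thr0} yields the lower bound $d\ge \tfrac{q-1}{2\gamma}+l+2$. For the dimension I must verify that $-qR$ adjoins exactly the $l+1$ top residues with indices $n-1-l\le j\le n-1$ and nothing more: the hypothesis $l\le \tfrac{q-6\gamma-1}{4\gamma}$ is precisely what secures $n-1-l>\tfrac{n}{2}+l$, so that these $l+1$ residues lie outside $R$. Thus $\lvert\overline{Z}\rvert=\bigl(\tfrac{q-1}{2\gamma}+l+1\bigr)+(l+1)=\tfrac{q-1}{2\gamma}+2l+2$, and the dimension equals $n-\lvert\overline{Z}\rvert$, as recorded in the statement.

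The step I expect to be the real obstacle is this last cardinality bookkeeping. The LCD property and the BCH estimate are essentially immediate once $\overline{Z}$ is written down, but pinning the dimension to the exact claimed value requires checking that closing $R$ under the two-branch involution of Lemma \ref{lm6}(1) introduces exactly $l+1$ new residues. One must treat the midpoint index $j=\tfrac{n}{2}$ correctly (it belongs to neither branch as literally stated in Lemma \ref{lm6}(1) and has to be computed directly) and rule out any accidental overlap between $R$ and its reflection outside the lower half, which is exactly where the admissible range of $l$ enters.
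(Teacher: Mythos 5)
Your construction is internally sound: the closure of $\overline{Z}=R\cup(-qR)$ under $s\mapsto -qs$ (including the direct check at the midpoint $j=\frac{n}{2}$, which Lemma \ref{lm6}(1) literally omits), the disjointness argument — which is exactly where $l\le\frac{q-6\gamma-1}{4\gamma}$ is needed — the cardinality count, and the application of Theorem \ref{thr0} are all correct. But be precise about what you have proved: your code has parameters $\left[n,\,n-\left(\frac{q-1}{2\gamma}+2l+2\right),\,d\ge\frac{q-1}{2\gamma}+l+2\right]$, whereas the printed statement has $\frac{q-1}{2}$, not $\frac{q-1}{2\gamma}$, in both the dimension and the distance. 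The two agree only when $\gamma=1$. For odd $\gamma\ge 3$ the printed parameters are impossible, since $\frac{q-1}{2}=\frac{\gamma n}{2}>n$ makes the claimed dimension negative and the claimed distance exceed $n$; so the statement carries a typo (compare Theorem \ref{thr7}, where the distance correctly reads $\frac{q-1}{2\gamma}+l+2$ but the dimension has the same defect). Your closing assertion that the dimension comes out ``as recorded in the statement'' silently identifies $\frac{q-1}{2\gamma}$ with $\frac{q-1}{2}$; you should instead say explicitly that you prove the corrected statement, which coincides with the printed one exactly when $\gamma=1$.

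On the comparison with the paper: the paper gives no proof at all, only a defining set displayed before the theorem, and that set is \emph{not} yours. Even after repairing its own typo ($\frac{q-2\gamma-1}{2}$ should be $\frac{q-2\gamma-1}{2\gamma}$), the paper's set is $\left\{1+2j:0\le j\le\frac{n}{2}-1\right\}\cup\left\{1+2j:\frac{n}{2}+l\le j\le n-1-l\right\}$, which has $n-2l$ elements and longest consecutive run $\frac{n}{2}$; it yields codes of dimension $2l$ with BCH bound $\frac{n}{2}+1$, degenerating to the zero code at $l=0$ — under no reading does it produce the stated parameters. Your set $\left\{1+2j:0\le j\le\frac{n}{2}+l\right\}\cup\left\{1+2j:n-1-l\le j\le n-1\right\}$, an initial run together with its $(-q)$-reflection, is the natural expansion of the defining set behind Theorem \ref{thr6a} and is the one that actually delivers the (corrected) claim; in this instance your route is the right one and the paper's is inconsistent with its own theorem. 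One further observation your approach buys: modulo $2n$ your $\overline{Z}$ is a \emph{single} run of $\frac{n}{2}+2l+2$ consecutive odd residues once you let it wrap around ($j=n-1$ is adjacent to $j=0$), so Theorem \ref{thr0} in fact gives $d\ge\frac{n}{2}+2l+3=n-k+1$: your codes are MDS, strictly stronger than the bound you and the theorem record.
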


\subsection{Negacyclic Hermitian LCD codes of length $n=q^{2}+1$}

In this subsection, we use negacyclic codes of length $n=q^{2}+1$ to
construct Hermitian LCD codes, where $q$ is an odd prime power. The
following is similar to Lemma 4.1 in \cite{Kai}.

\begin{lemma}
\label{lm8}Let $n=q^{2}+1.$ Then, the $q^{2}$-cyclotomic cosets modulo $2n$
containing odd integers from $1$ to $2n$ are $C_{1+2j}=\left\{ {1+2j,n-1-2j}%
\right\} ,$ $0\leq j<\frac{n-2}{4},C_{1+2j}=\left\{ {1+2j}\right\} $, $j=%
\frac{n-2}{4}$, $C_{1+2j}=\left\{ {1+2j,3n-1-2j}\right\} ,$ $\frac{n}{2}\leq
j<\frac{3n-2}{4}$, and $C_{1+2j}=\left\{ {1+2j}\right\} ,$ $j=\frac{3n-2}{4}%
. $
\end{lemma}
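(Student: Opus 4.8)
The plan is to reduce the whole statement to two elementary congruences modulo $2n$ and then to track integer ranges carefully. First I would record the structural facts about $n=q^2+1$. Since $q$ is an odd prime power we have $q^2\equiv 1\pmod 8$, hence $n=q^2+1\equiv 2\pmod 8$; in particular $n$ is even, $4\mid 2n$, and both $\frac{n-2}{4}$ and $\frac{3n-2}{4}$ are integers, which legitimizes the boundary indices in the statement. Moreover, since $q^2=n-1$ and $0\le n-1<2n$, we have the clean identity $q^2\equiv n-1\pmod{2n}$.

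The heart of the argument is a single computation. Using $2jn\equiv 0\pmod{2n}$,
\begin{equation*}
q^2(1+2j)=(n-1)(1+2j)=n-1-2j+2jn\equiv n-1-2j\pmod{2n}.
\end{equation*}
Applying $q^2$ once more and using $(n-1)^2=n^2-2n+1\equiv 1\pmod{2n}$ (valid because $n$ even forces $2n\mid n^2$) gives
\begin{equation*}
q^2(n-1-2j)\equiv (n-1)^2-2j(n-1)\equiv 1+2j\pmod{2n}.
\end{equation*}
Thus multiplication by $q^2$ is an involution on the odd residues, so every coset satisfies $C_{1+2j}=\{\,1+2j,\ (n-1-2j)\bmod 2n\,\}$ and has at most two elements.

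Next I would pin down the reduced representative of $n-1-2j$ and locate the singletons. The odd residue $1+2j$ is never equal to $n$ (as $n$ is even), so I split by whether $1+2j<n$ or $1+2j>n$, i.e. $j<\frac n2$ or $j\ge\frac n2$. For $j<\frac n2$ the integer $n-1-2j$ already lies in $[1,2n)$, giving the partner $n-1-2j$ of items (1)--(2); for $j\ge\frac n2$ it is negative and I add $2n$ to obtain $3n-1-2j\in[1,2n)$, giving the partner $3n-1-2j$ of items (3)--(4). The coset is a singleton exactly when $1+2j\equiv n-1-2j\pmod{2n}$, that is $4j\equiv n-2\pmod{2n}$; since $4\mid 2n$ and $4\mid(n-2)$ this congruence has solutions $j\equiv\frac{n-2}{4}\pmod{\frac n2}$, of which precisely two lie in $0\le j\le n-1$, namely $j=\frac{n-2}{4}$ and $j=\frac{n-2}{4}+\frac n2=\frac{3n-2}{4}$, matching items (2) and (4).

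Finally I would check that the four families list every odd residue exactly once. The cleanest route is a counting argument: the ranges $0\le j<\frac{n-2}{4}$ and $\frac n2\le j<\frac{3n-2}{4}$ each contribute $\frac{n-2}{4}$ two-element cosets, and together with the two singletons this accounts for $2\cdot 2\cdot\frac{n-2}{4}+2=n$ odd residues, which is exactly the number of odd integers in $[1,2n)$; alternatively one checks directly that the partners $n-1-2j$ and $3n-1-2j$ sweep out precisely the omitted index ranges $\bigl(\frac{n-2}{4},\frac n2\bigr)$ and $\bigl(\frac{3n-2}{4},\,n-1\bigr]$. The only genuine care needed---the main, if mild, obstacle---is the modular reduction of $n-1-2j$ across the threshold $j=\frac n2$ and the verification that the two boundary indices are integral and produce singletons; everything else follows from the two displayed congruences.
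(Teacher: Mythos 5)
Your proof is correct, but note that the paper itself offers no proof of this lemma at all: it simply remarks that the statement ``is similar to Lemma 4.1 in'' the reference of Kai and Zhu and moves on. Your argument therefore supplies a self-contained verification of a fact the paper delegates to the literature, and it does so by the natural route: the identity $q^{2}\equiv n-1\pmod{2n}$ makes multiplication by $q^{2}$ act on odd residues as $1+2j\mapsto n-1-2j$, the computation $(n-1)^{2}\equiv 1\pmod{2n}$ (using that $n$ is even) shows this action is an involution so every coset has at most two elements, and the congruence $n\equiv 2\pmod 8$ (from $q$ odd) legitimizes the boundary indices $\frac{n-2}{4}$ and $\frac{3n-2}{4}$ and pins down exactly two singleton cosets. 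The only place where a careless reading could leave a gap is your counting argument for completeness: summing $2\cdot 2\cdot\frac{n-2}{4}+2=n$ presupposes that the listed cosets are pairwise distinct, which you do not verify in that branch; however, your alternative check---that the partner indices $\frac{n-2}{2}-j$ and $\frac{3n-2}{2}-j$ sweep out precisely the omitted ranges $\bigl(\frac{n-2}{4},\frac{n}{2}\bigr)$ and $\bigl(\frac{3n-2}{4},n-1\bigr]$---is correct and closes the argument rigorously, so the proof stands as written.
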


For the length $n=q^{2}+1$ we consider two cases. The first case is $q\equiv
1$ $(mod$ $4).$ We establish the defining set $\overline{Z}$ as $\overline{Z}%
=\overline{Z}_{1}\cup -q\overline{Z}_{1}$ where $\overline{Z}%
_{1}=\bigcup\limits_{j=l}^{\frac{{q}^{2}{-1}}{4}}{{C_{1+2j}}},$ $\frac{%
\left( q-1\right) ^{2}}{4}\leq l\leq \frac{{q}^{2}{-1}}{4}.$ In \cite{Kai}
it was shown that $\overline{Z}_{1}\cap -q\overline{Z}_{1}=\emptyset .$
Therefore, the cardinality of the defining set $\overline{Z}$ is $\left\vert
\overline{Z}\right\vert =4\left( \frac{{q}^{2}{-1}}{4}-l\right) +2={q}%
^{2}-4l+1.$ Furthermore, $\overline{Z}$ contains at least $\frac{{q}^{2}-4l+1%
}{2}$ consecutive terms. Then, we have the following result.

\begin{theorem}
\label{thr8}Let $q$ be an odd prime power with $q\equiv 1$ $(mod$ $4).$
Then, there exists a class of $q^{2}$-ary negacyclic Hermitian LCD codes
with parameters ${\left[ q^{2}+1{,4l,d\geq }\frac{{q}^{2}-4l+3}{2}\right] },$
where $\frac{\left( q-1\right) ^{2}}{4}\leq l\leq \frac{{q}^{2}{-1}}{4}.$
\end{theorem}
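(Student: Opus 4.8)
The plan is to verify the three claimed parameters $[q^2+1, 4l, d \ge \frac{q^2-4l+3}{2}]_{q^2}$ directly from the structure of the defining set $\overline{Z}$, using the machinery already assembled: Corollary \ref{cor0} for the Hermitian LCD property, the BCH bound (Theorem \ref{thr0}) for the minimum distance, and the general dimension formula $\dim \mathcal{C} = n - |\overline{Z}|$. Since the excerpt has already done the combinatorial bookkeeping just before the statement, establishing $|\overline{Z}| = q^2 - 4l + 1$ and that $\overline{Z}$ contains at least $\frac{q^2-4l+1}{2}$ consecutive terms, the proof reduces to assembling these pieces correctly.

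First I would let $\mathcal{C} = \langle g(x)\rangle$ be the negacyclic code of length $n = q^2+1$ over $\mathbb{F}_{q^2}$ with defining set $\overline{Z}$. \textbf{LCD property.} By construction $\overline{Z} = \overline{Z}_1 \cup (-q\overline{Z}_1)$, so $-q\overline{Z} = -q\overline{Z}_1 \cup q^2 \overline{Z}_1$. Since the cosets are $q^2$-cyclotomic cosets modulo $2n$, we have $q^2 \overline{Z}_1 = \overline{Z}_1$, whence $-q\overline{Z} = \overline{Z}$. By Corollary \ref{cor0}, the condition $\overline{Z} = -q\overline{Z}$ guarantees that $\mathcal{C}$ is Hermitian LCD. \textbf{Dimension.} Using the cited fact from \cite{Kai} that $\overline{Z}_1 \cap (-q\overline{Z}_1) = \emptyset$, the cardinality $|\overline{Z}| = 2|\overline{Z}_1|$. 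Lemma \ref{lm8} gives the sizes of the cyclotomic cosets making up $\overline{Z}_1 = \bigcup_{j=l}^{(q^2-1)/4} C_{1+2j}$; counting yields $|\overline{Z}| = q^2 - 4l + 1$, so $\dim \mathcal{C} = n - |\overline{Z}| = (q^2+1) - (q^2 - 4l + 1) = 4l$.

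\textbf{Minimum distance.} Here I would invoke the BCH bound. Since $\overline{Z}$ is shown to contain at least $\frac{q^2 - 4l + 1}{2}$ consecutive odd integers of the form $1 + 2j$, Theorem \ref{thr0} gives $d \ge \frac{q^2 - 4l + 1}{2} + 1 = \frac{q^2 - 4l + 3}{2}$, which is exactly the claimed bound. Assembling the LCD property, the dimension $4l$, and this distance bound completes the argument.

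The main obstacle is the \textbf{consecutive-terms count}, i.e.\ verifying rigorously that $\overline{Z}$ (the union of $\overline{Z}_1$ and its image under $-q$) really does contain $\frac{q^2-4l+1}{2}$ genuinely consecutive elements in the cyclic ordering of $O_{2,n}(1)$. The two halves $\overline{Z}_1$ and $-q\overline{Z}_1$ each contribute a block of consecutive odd residues modulo $2n$, and one must confirm that these blocks either abut or overlap so as to produce a single run of the stated length, rather than two short disjoint runs. This requires tracking the explicit ranges coming from Lemma \ref{lm8} and the action of $-q$ on the indices; the constraint $\frac{(q-1)^2}{4} \le l \le \frac{q^2-1}{4}$ is precisely what forces enough overlap, so the hard part is checking that the consecutive block extends fully across this lower-bound regime. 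The LCD and dimension parts are comparatively routine once Lemma \ref{lm8} and the disjointness from \cite{Kai} are in hand.
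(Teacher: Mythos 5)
Your proposal is correct and follows essentially the same route as the paper: the paper's proof of Theorem \ref{thr8} consists precisely of the setup you assemble, namely $-q\overline{Z}=\overline{Z}$ by construction (Hermitian LCD via Theorem \ref{thr04}), $\left\vert \overline{Z}\right\vert =q^{2}-4l+1$ from the disjointness result cited from \cite{Kai} (hence dimension $4l$), and the BCH bound of Theorem \ref{thr0} applied to the consecutive run (hence $d\geq \frac{q^{2}-4l+3}{2}$). One clarification on the obstacle you flag: no analysis of whether $\overline{Z}_{1}$ and $-q\overline{Z}_{1}$ abut or overlap is needed, since by Lemma \ref{lm8} each coset $C_{1+2j}=\left\{ 1+2j,\,n-1-2j\right\} $ is symmetric about $n/2$, so $\overline{Z}_{1}=\bigcup_{j=l}^{(q^{2}-1)/4}C_{1+2j}$ is by itself the full interval of odd residues from $1+2l$ to $n-1-2l$, containing exactly $\frac{q^{2}-4l+1}{2}$ consecutive terms; this block alone yields the distance bound, and the copy $-q\overline{Z}_{1}$ is needed only for the LCD property and the dimension count (where disjointness, guaranteed by the lower bound on $l$, ensures the dimension is exactly $4l$).
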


The other case is $q\equiv 3$ $(mod$ $4).$

\begin{lemma}
\label{lm8a}Let $n=q^{2}+1$ and $q\equiv 3$ $(mod$ $4).$ Then, we have the
following.

\begin{enumerate}
\item For $j=\frac{{q}^{2}{-1}}{4},$ $C_{1+2j}=-qC_{1+2j}.$

\item For all $\frac{\left( q-1\right) \left( q-3\right) }{4}\leq j,k<\frac{{%
q}^{2}{-1}}{4},$ $C_{1+2k}\neq -qC_{1+2j}.$
\end{enumerate}
\end{lemma}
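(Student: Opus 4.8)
The plan is to parametrise the admissible indices and then collapse both assertions to elementary congruences modulo $n=q^{2}+1$, using the hypothesis $q\equiv 3\pmod 4$ at a single point. Throughout I will use that for odd $s$ one has $sq^{2}\equiv n-s\pmod{2n}$, so that every non-singleton $q^{2}$-cyclotomic coset consists of two elements summing to $n$ modulo $2n$, and that $-qC_{1+2j}=C_{-q(1+2j)}$ is again such a coset.

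For part (1), observe that $j=\frac{q^{2}-1}{4}=\frac{n-2}{4}$ is exactly the exceptional value in Lemma \ref{lm8}, so $C_{1+2j}=\{1+2j\}=\{\frac{n}{2}\}$ is a singleton. Hence $-qC_{1+2j}=\{-q\cdot\frac{n}{2}\bmod 2n\}$, and it suffices to check $-q\cdot\frac{n}{2}\equiv\frac{n}{2}\pmod{2n}$. Writing $q=2u+1$ with $u=\frac{q-1}{2}$, one gets $-\frac{qn}{2}=-un-\frac{n}{2}$; since $q\equiv 3\pmod 4$ makes $u$ odd, we have $-un\equiv n\pmod{2n}$ and therefore $-\frac{qn}{2}\equiv\frac{n}{2}\pmod{2n}$, giving $-qC_{1+2j}=C_{1+2j}$. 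This is the only place where the residue of $q$ modulo $4$ is used.

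For part (2), I would first rewrite the range. Since $[\frac{(q-1)(q-3)}{4},\frac{q^{2}-1}{4})$ contains exactly $q-1$ integers, set $j=\frac{q^{2}-1}{4}-s$ and $k=\frac{q^{2}-1}{4}-s'$ with $s,s'\in\{1,\dots,q-1\}$, so $1+2j=\frac{n}{2}-2s$ and $1+2k=\frac{n}{2}-2s'$. Because $2s,2s'<\frac{n}{2}$, Lemma \ref{lm8} puts the cosets in the symmetric form $C_{1+2j}=\{\frac{n}{2}-2s,\frac{n}{2}+2s\}$ and $C_{1+2k}=\{\frac{n}{2}-2s',\frac{n}{2}+2s'\}$. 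Repeating the computation of part (1) gives $-q(\frac{n}{2}-2s)\equiv\frac{n}{2}+2qs\pmod{2n}$, and since the two members of $-qC_{1+2j}$ sum to $n$, we obtain $-qC_{1+2j}=\{\frac{n}{2}+2qs,\frac{n}{2}-2qs\}\pmod{2n}$.

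With both cosets displayed as $\{\frac{n}{2}\pm x\}$, the equality $C_{1+2k}=-qC_{1+2j}$ is equivalent to $2s'\equiv\pm 2qs\pmod{2n}$, i.e. to $s'\equiv\pm qs\pmod n$. It then remains to exclude this for $s,s'\in\{1,\dots,q-1\}$. For such $s$ one has $qs\in\{q,2q,\dots,q^{2}-q\}$, all strictly between $0$ and $n$, so $qs-s'\in[1,q^{2}-q-1]$ and $qs+s'\in[q+1,q^{2}-1]$, both lying strictly inside $(0,n)$; hence neither $qs\equiv s'$ nor $qs\equiv -s'$ can hold modulo $n$, contradicting the assumed equality. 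I expect the only real difficulty to be the bookkeeping in the reduction $-q(\frac{n}{2}-2s)\equiv\frac{n}{2}+2qs$ (where $q\equiv 3\pmod 4$ enters) together with the passage from the set equality $\{\frac{n}{2}\pm x\}=\{\frac{n}{2}\pm y\}$ to $x\equiv\pm y\pmod{2n}$; once these are secured, the final bound is immediate.
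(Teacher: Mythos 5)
Your proof is correct, and while it follows the same elementary template as the paper (reduce the coset equality to a congruence modulo $n=q^{2}+1$, then kill it with size bounds), your bookkeeping in part (2) is genuinely different and cleaner. Part (1) is essentially identical in both: the paper deduces $-q\cdot\frac{n}{2}\equiv\frac{n}{2}\pmod{2n}$ from $4\mid(q+1)$, i.e. $(q+1)\frac{n}{2}\equiv 0\pmod{2n}$, which is the same fact as your observation that $u=\frac{q-1}{2}$ is odd; incidentally, your $\{\frac{n}{2}\}$ corrects a typo in the paper, which prints the singleton as $\{\frac{q^{2}+1}{4}\}$ instead of $\{\frac{q^{2}+1}{2}\}$. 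In part (2) the paper argues by contradiction and handles the two representatives of each coset by enlarging the index range: it lets $j,k$ run over $[\frac{(q-1)(q-3)}{4},\frac{(q-1)(q+5)}{4}]$ with $\frac{q^{2}-1}{4}$ excluded, reduces the assumed equality to the single congruence $\frac{q+1}{2}+k+qj\equiv 0\pmod n$, and then shows by interval bounds that the only attainable multiple of $n$ is $\frac{(q+1)}{4}n$, attained only at the excluded pair $j=k=\frac{q^{2}-1}{4}$. You instead absorb the two representatives into a sign by centering at $\frac{n}{2}$: the cosets become $\{\frac{n}{2}\pm 2s\}$ and $-qC_{1+2j}=\{\frac{n}{2}\pm 2qs\}$, the equality becomes $s'\equiv\pm qs\pmod n$ with $s,s'\in\{1,\dots,q-1\}$, and both signs are excluded at once because $qs-s'\in[1,q^{2}-q-1]$ and $qs+s'\in[q+1,q^{2}-1]$ lie strictly inside $(0,n)$. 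The two devices (enlarged index range versus the $\pm$ sign) are equivalent, but your version buys an immediate exclusion with no case analysis of which multiple of $n$ could be hit, whereas the paper's last step---that $\frac{q+1}{2}+k+qj=\frac{(q+1)}{4}n$ forces $j=k=\frac{q^{2}-1}{4}$---is asserted rather than verified and in fact requires roughly the same amount of work as your entire bound. Both routes invoke $q\equiv 3\pmod 4$ at exactly one point, namely to obtain $-q\cdot\frac{n}{2}\equiv\frac{n}{2}\pmod{2n}$, just as you note.
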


\begin{proof}
\begin{enumerate}
\item For $j=\frac{{q}^{2}{-1}}{4},$ $C_{1+2j}=\left\{ \frac{{q}^{2}+{1}}{4}%
\right\} .$ Since $\left. 4\right\vert \left( q+1\right) ,$ $\left(
q+1\right) \frac{\left( q^{2}+1\right) }{2}\equiv 0$ $(mod$ $2n)$ and so $-q%
\frac{\left( q^{2}+1\right) }{2}\equiv \frac{\left( q^{2}+1\right) }{2}$ $%
(mod$ $2n).$

\item Assume otherwise. Then, for some $\frac{\left( q-1\right) \left(
q-3\right) }{4}\leq j,k\leq \frac{\left( q-1\right) \left( q+5\right) }{4}$
except for $\frac{{q}^{2}{-1}}{4},$ $-q\left( 1+2j\right) \equiv 1+2k$ $(mod$
$2n)$ or $\frac{{q}+{1}}{2}+k+qj\equiv 0$ $(mod$ $n).$ It follows from $%
\frac{\left( q-1\right) \left( q-3\right) }{4}\leq j,k\leq \frac{\left(
q-1\right) \left( q+5\right) }{4}$ that $\frac{\left( q-3\right) }{4}n+2\leq
\frac{{q}+{1}}{2}+k+qj\leq \frac{\left( q+5\right) }{4}n-2.$ This implies
that the possible value of $\frac{{q}+{1}}{2}+k+qj$ is only $\frac{\left(
q+1\right) }{4}n,$ which is possible only when $k=j=\frac{{q}^{2}{-1}}{4}.$
This contradicts with the choice of $j$ and $k.$
\end{enumerate}
\end{proof}

For the case $q\equiv 3$ $(mod$ $4),$ we adjust the defining set $\overline{Z%
}$ as $\overline{Z}=\overline{Z}_{2}\cup -q\overline{Z}_{2},$ where $%
\overline{Z}_{2}=\bigcup\limits_{j=l}^{\frac{{q}^{2}{-1}}{4}}{{C_{1+2j}}},$ $%
\frac{\left( q-1\right) \left( q-3\right) }{4}\leq l\leq \frac{{q}^{2}{-1}}{4%
}.$ By Lemma \ref{lm8a}, the cardinality of the defining set $\overline{Z}$
is $\left\vert \overline{Z}\right\vert =4\left( \frac{{q}^{2}{-1}}{4}%
-l\right) +1={q}^{2}-4l.$ Additionally, $\overline{Z}$ contains at least $%
\frac{{q}^{2}-4l+1}{2}$ consecutive terms. Then, we have the following
result.

\begin{theorem}
\label{thr8a}Let $q$ be an odd prime power with $q\equiv 3$ $(mod$ $4).$
Then, there exists a class of $q^{2}-$ary negacyclic Hermitian LCD codes
with parameters ${\left[ q^{2}+1{,4l+1,d\geq }\frac{{q}^{2}-4l+3}{2}\right] }%
,$ where $\frac{\left( q-1\right) \left( q-3\right) }{4}\leq l\leq \frac{{q}%
^{2}{-1}}{4}.$
\end{theorem}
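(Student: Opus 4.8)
The plan is to mirror the proof of Theorem~\ref{thr8} for the case $q\equiv 1\ (\mathrm{mod}\ 4)$, replacing the relevant cyclotomic arithmetic by Lemma~\ref{lm8a}. First I would take $\mathcal{C}$ to be the negacyclic code of length $n=q^{2}+1$ over $\mathbb{F}_{q^{2}}$ with the defining set $\overline{Z}=\overline{Z}_{2}\cup(-q\overline{Z}_{2})$ already set up above, where $\overline{Z}_{2}=\bigcup_{j=l}^{(q^{2}-1)/4}C_{1+2j}$ and $\frac{(q-1)(q-3)}{4}\le l\le \frac{q^{2}-1}{4}$. Because $\overline{Z}_{2}$ is by construction a union of $q^{2}$-cyclotomic cosets modulo $2n$, it is fixed by multiplication by $q^{2}$, i.e. $q^{2}\overline{Z}_{2}=\overline{Z}_{2}$. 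Hence $-q\overline{Z}=(-q\overline{Z}_{2})\cup(q^{2}\overline{Z}_{2})=(-q\overline{Z}_{2})\cup\overline{Z}_{2}=\overline{Z}$, and Corollary~\ref{cor0} then shows at once that $\mathcal{C}$ is a Hermitian LCD code. This step is essentially free once the defining set is packaged in the symmetric form $\overline{Z}_{2}\cup(-q\overline{Z}_{2})$.

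Next I would compute the dimension $k=n-|\overline{Z}|$. Lemma~\ref{lm8} tells us that for $l\le j<\frac{q^{2}-1}{4}$ the coset $C_{1+2j}$ has two elements, while the coset at $j=\frac{q^{2}-1}{4}$ is a singleton; thus $|\overline{Z}_{2}|=2\!\left(\frac{q^{2}-1}{4}-l\right)+1$. The key input is Lemma~\ref{lm8a}: part~(1) says the singleton coset at $j=\frac{q^{2}-1}{4}$ is fixed by $-q$ and so belongs to both $\overline{Z}_{2}$ and $-q\overline{Z}_{2}$, whereas part~(2) (applicable precisely because $l\ge\frac{(q-1)(q-3)}{4}$) ensures that no \emph{other} coset of $\overline{Z}_{2}$ is the $-q$-image of a coset of $\overline{Z}_{2}$. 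Consequently $\overline{Z}_{2}$ and $-q\overline{Z}_{2}$ meet only in that one fixed singleton, and inclusion--exclusion gives $|\overline{Z}|=2|\overline{Z}_{2}|-1=4\!\left(\frac{q^{2}-1}{4}-l\right)+1=q^{2}-4l$, so $k=(q^{2}+1)-(q^{2}-4l)=4l+1$.

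For the distance I would invoke the BCH bound of Theorem~\ref{thr0}. As recorded above, $\overline{Z}$ contains at least $\frac{q^{2}-4l+1}{2}$ consecutive terms of the form $1+2j$; taking $d-1=\frac{q^{2}-4l+1}{2}$ in Theorem~\ref{thr0} yields minimum distance at least $\frac{q^{2}-4l+1}{2}+1=\frac{q^{2}-4l+3}{2}$. Putting the three pieces together produces the asserted parameters $\left[q^{2}+1,\,4l+1,\,d\ge\frac{q^{2}-4l+3}{2}\right]_{q^{2}}$. I expect the only real obstacle to be the careful bookkeeping in the cardinality count: one must confirm that both indices involved in any potential coincidence $C_{1+2k}=-qC_{1+2j}$ lie in the admissible window $\frac{(q-1)(q-3)}{4}\le j,k<\frac{q^{2}-1}{4}$ so that Lemma~\ref{lm8a}(2) applies, and that the fixed singleton is the unique overlap; the LCD property and the distance bound are then immediate.
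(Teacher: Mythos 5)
Your proposal is correct and takes essentially the same route as the paper: the same defining set $\overline{Z}=\overline{Z}_{2}\cup(-q\overline{Z}_{2})$, the same use of Lemmas \ref{lm8} and \ref{lm8a} to show that the fixed singleton coset at $j=\frac{q^{2}-1}{4}$ is the unique overlap (giving $\left\vert \overline{Z}\right\vert =q^{2}-4l$ and dimension $4l+1$), and the same BCH-bound argument via Theorem \ref{thr0} for the distance. One small citation quibble: the LCD property of this specific code follows from Theorem \ref{thr04} (the defining set is closed under multiplication by $-q$), not from Corollary \ref{cor0}, which is only an existence statement.
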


\begin{table}
\caption{Some negacyclic Hermitian LCD codes obtained from Theorems \protect
\ref{thr8} and \protect\ref{thr8a}}
\label{table6}\centering
\begin{tabular}{cccc}
\hline $q$ & $n$ & $l$ & Negacyclic Hermitian LCD Codes \\
\hline $3$ & $10$ & $0\leq l\leq 2$ & ${\left[ {10,4l+1,\geq }\frac{{9}%
-4l+3}{2}\right] _{3^{2}}}$ \\
$5$ & $26$ & $4\leq l\leq 6$ & ${\left[ {26,4l,d\geq }\frac{{25}-4l+3}{2}%
\right] _{5^{2}}}$ \\
$7$ & $50$ & $6\leq l\leq 12$ & ${\left[ {50,4l+1,d\geq }\frac{{49}-4l+3}{2}%
\right] _{7^{2}}}$ \\
$13$ & $170$ & $36\leq l\leq 42$ & ${\left[ {170,4l,d\geq }\frac{1{69}-4l+3}{%
2}\right] _{13^{2}}}$ \\
\hline
\end{tabular}%
\end{table}

\section{Conclusion}

In this paper, we study some classes of MDS negacyclic LCD codes of length $%
n|\frac{{q-1}}{2},$ $n|\frac{{q+1}}{2}$ and some classes of negacyclic LCD
codes of length $n=q+1$. In Theorem \ref{thr1} we give a corrected and
generalized of the result of Theorem 6.1 in \cite{Zhu}. We also obtain some
classes of $q^{2}$-ary Hermitian MDS negacyclic LCD codes of length $%
n|\left( q-1\right) $ and some classes of $q^{2}$-ary Hermitian negacyclic
LCD codes $n=q^{2}+1.$ We remark that the parameters of Hermitian LCD codes, which was given in
\cite{Galvez,Li2}, haven't covered ones given in this paper.

\end{document}